\newcommand{\cal}[1]{\ensuremath{\mathcal{#1}}} 
\newcommand{\p}{\ensuremath{\varphi}}
\theoremstyle{plain}
\newtheorem{theorem}{Theorem}[section]
\newtheorem{prop}[theorem]{Proposition}
\newtheorem{lem}[theorem]{Lemma}
\newtheorem{cor}[theorem]{Corollary}
\theoremstyle{definition}
\newtheorem{defn}[theorem]{Definition}
\begin{document}

\title[Complete spectral data for analytic Anosov maps of the torus]
{Complete spectral data for analytic Anosov maps of the torus}

\author{J.~Slipantschuk, O.F.~Bandtlow, W.~Just}
\address{School of Mathematical Sciences, Queen Mary University of London,
Mile End Road, London E1 4NS, UK}
\email{J.Slipantschuk@qmul.ac.uk, O.Bandtlow@qmul.ac.uk, W.Just@qmul.ac.uk}

\date{9 May 2016}

\begin{abstract}
Using analytic properties of Blaschke factors we construct
a family of analytic hyperbolic diffeomorphisms of the torus for 
which the spectral
properties of the associated transfer operator acting on a 
suitable Hilbert space 
can be computed explicitly. As a result, we obtain explicit 
expressions for the decay of correlations
of analytic observables without resorting to any kind of
perturbation argument.
\end{abstract}

\subjclass[2000]{37D20 (37A25, 47A35)}

\maketitle

\section{Introduction}\label{sec1}

Spectral theory constitutes one of the major approaches to study 
complex chaotic motion. 
Drawing on both functional analytic techniques and dynamical systems
theory, it furnishes a powerful method to construct invariant measures 
with good statistical properties as well as 
a means to study the fine-structure of the corresponding 
correlation decay. The general theory is fairly 
well developed  (see, for example, \cite{KaHa:96,Kell:98,Bala:00}) 
and has resulted in several major breakthroughs in the understanding of complex
dynamical behaviour from an ergodic theoretic perspective.
Despite this deep understanding there is still a considerable lack of
exactly solvable models serving as paradigmatic examples 
illustrating the theory.  

To date, examples of maps for which spectral properties of the 
corresponding transfer operator can be 
computed explicitly are essentially limited to the one-dimensional uniformly 
expanding case, with the first examples arising in the context of 
piecewise linear Markov maps, where spectral properties can be reduced
to finite-dimensional
matrix calculations (see \cite{MoSoOs_PTP81}; see also \cite{SBJ1} 
for a more recent exposition). 
Exploiting the rich analytic structure of Blaschke products 
(see, for example, \cite{Mart_BLMS83})
nonlinear examples of full-branch analytic expanding interval maps for 
which complete spectral data of the corresponding transfer operator 
is available 
have recently been introduced by the authors (see
\cite{SlBaJu_NONL13}; 
see also \cite{BJS} for examples of analytic expanding maps of the circle). 

Trivial examples obtained by taking products of one-dimensional maps
excepted, the situation in higher dimensions is even more challenging, 
which is unfortunate, since 
diffeomorphisms, in particular higher-dimensional symplectic maps,  
play a vital role for the dynamical foundations of nonequilibrium
statistical mechanics, in particular regarding irreversibility
and entropy production \cite{AnTs_PA92,BaCo_JPA94,Dorf:98,Gall_CHA98}. 
Due to the lack of available models explicit calculations are normally
limited to the linear case, including the 
celebrated Arnold cat map or baker-type transformations. To the best
of our knowledge not a single properly 
nonlinear diffeomorphism is known for which
the entire spectrum and the corresponding correlation decay rates have 
been computed 
explicitly. We try to fill this gap by introducing a model where
all the spectral information is available.

For hyperbolic maps with expanding and contracting directions, 
progress was for a long time hampered by the lack of suitable 
function spaces on which the corresponding transfer operator can be shown to 
have good spectral properties. 
This changed with the publication of \cite{BKL}, where it was shown
that by adapting the space to take into account expanding and 
contracting directions the spectral properties of transfer operators 
familiar from the uniformly expanding situation can be retained for
Anosov diffeomorphisms of compact manifolds.  
Since then, quite a number of these `anisotropic' Banach spaces 
have been constructed (see \cite{GL1, GL2, BalT1, BalT2, BalG}),
capturing the behaviour of rather general hyperbolic diffeomorphisms with low
regularity. The main thrust of all these papers has been to show that
the associated transfer operator is quasicompact, that is, its peripheral
spectrum is discrete like that of a compact operator, but lower-lying
spectral points may (and usually will) be part of the essential
spectrum, characterised by persistence under compact perturbations. 

There are only few papers dealing with hyperbolic diffeomorphisms with
very high regularity, where there is a chance of obtaining compact
transfer operators. In the analytic setting, Rugh, in a paper
predating \cite{BKL}, has constructed anisotropic Banach spaces of
analytic functions on which the transfer operator of hyperbolic maps
with rather special geometries can be shown to be trace class, and
hence compact (see \cite{Rug}). 

In the following, we will introduce an example of an analytic
hyperbolic diffeomorphism of the torus, for which 
the 
entire spectrum of a properly defined 
compact transfer operator can be computed and 
linked with correlation decay of analytic 
observables. The underlying space is inspired by a recent study of 
Faure and Roy \cite{FaRo_NONL06}, who were able to link the
correlation decay of small analytic perturbations of linear 
automorphisms of the torus to spectral properties of a certain
transfer operator. While we still base our analysis on an 
analytic deformation of the cat map,  
we do not need to resort to a perturbative treatment, 
unlike \cite{FaRo_NONL06}. 

For any complex number $\lambda $ smaller than one in modulus let us introduce 
the analytic map $T:\mathbb{T}^2\rightarrow
\mathbb{T}^2$ on the complex unit torus $\mathbb{T}^2\subset
\mathbb{C}^2$ 
defined by
\begin{equation}\label{aa}
T(z_1,z_2)=\left(z_1 \frac{z_1-\lambda}{1-\bar{\lambda} z_1} z_2, 
\frac{z_1-\lambda}{1-\bar{\lambda} z_1} z_2 \right) \, .
\end{equation}
Using canonical coordinates on the unit torus, $z_\ell=\exp(i\phi_\ell)$, the 
real representation of the map reads
\begin{equation}\label{ab}
(\phi_1, \phi_2)\mapsto (2 \phi_1 + \psi(\phi_1) + \phi_2, \phi_1 + \psi(\phi_1)+\phi_2)\,,
\end{equation}
where the nonlinear part is given by
\begin{equation}\label{ac}
\psi(\phi)=2 \arctan \left(\frac{|\lambda|\sin(\phi-\alpha)}{1-|\lambda|\cos(\phi-\alpha)}\right) \,.
\end{equation}
Here, $\lambda=|\lambda|\exp(i\alpha)$ denotes the polar 
representation of the 
parameter with $|\lambda|<1$. Clearly, our family of maps 
contains the Arnold 
cat map for the choice $\lambda=0$. The toral map (\ref{aa}) is a special case of a so-called
two-dimensional Blaschke product which has already received some attention in the context of 
ergodic theory (see \cite{PuSh_ETDS08}). 

It is not difficult to see that the derivative of the map 
given by (\ref{ab}) maps the first and third quadrant of 
$\mathbb{R}^2$ strictly inside itself and that the derivative of its
inverse maps the second and fourth quadrant of $\mathbb{R}^2$ strictly
inside itself. Thus (\ref{aa}) yields a family of 
analytic uniformly hyperbolic toral diffeomorphisms,
also known as Anosov diffeomorphisms (see \cite[Lemma~4]{Mose_JDE69}
or \cite[Chapter~2.1.b]{hasselblatt}).

Clearly, the map defined
by (\ref{ab}) is area-preserving and thus provides an example of a  
chaotic Hamiltonian system. A few more
empirical features, numerical simulations, and some basic 
results on correlation decay are presented in Appendix~\ref{appa}.

Unlike the situation for one-dimensional 
non-invertible maps there is no clear distinction between
Perron-Frobenius and Koopman operators as we are dealing with area preserving
diffeomorphisms. The operator governing the dynamics of our system is 
essentially 
a composition operator $\mathcal C$ defined by 
\begin{equation}\label{ad}
(\mathcal{C} f)(z_1,z_2) = (f \circ T)(z_1,z_2)=
f\left(z_1 \frac{z_1-\lambda}{1-\bar{\lambda} z_1} z_2, 
\frac{z_1-\lambda}{1-\bar{\lambda} z_1} z_2 \right)
\end{equation}
where $f: \mathbb{T}^2 \rightarrow \mathbb{C}$. 
As alluded to earlier, the choice of a space of functions on which
$\mathcal C$ acts and has nice spectral properties is a delicate
matter. 
We shall use a family of 
Hilbert spaces $\mathcal{H}_a$ indexed by 
a positive real parameter $a$ 
which contains all 
Laurent polynomials on the unit 
torus as a dense subset. 
Postponing the formal definition to the following section our main result can 
be stated as follows. 
\begin{theorem}\label{propa}
The composition operator $\mathcal{C}: \mathcal{H}_a \rightarrow \mathcal{H}_a$ 
is a well-defined compact operator for any $a>0$ and any $|\lambda|<1$. Its 
spectrum is given by
\begin{equation}\label{ae}
\sigma(\mathcal{C}) = \{ (-\lambda)^n: n\in\mathbb{N} \} \cup 
\{ (-\bar{\lambda})^n: n\in\mathbb{N} \}
\cup \{1,0\} \,.
\end{equation}
Each non-zero element of the spectrum is an eigenvalue the algebraic multiplicity of which 
coincides with the number of times the non-zero number occurs in 
(\ref{ae}). 
\end{theorem}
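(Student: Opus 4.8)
The plan is to pass to the Laurent/Fourier representation and to exploit a triangular structure inherited from the Blaschke factor $b(z)=\frac{z-\lambda}{1-\bar\lambda z}$. Writing $T(z_1,z_2)=(z_1\,b(z_1)\,z_2,\;b(z_1)\,z_2)$ and acting on a single monomial, one computes directly that $\mathcal{C}(z_1^m z_2^n)=z_1^m\,b(z_1)^{m+n}\,z_2^{m+n}$. The decisive features are that the $z_2$-degree is sent to $m+n$, and that, since $b$ is analytic and nonvanishing on an annulus around $|z_1|=1$ (its zero $\lambda$ and pole $1/\bar\lambda$ lie off this annulus precisely because $|\lambda|<1$), the factor $b(z_1)^{m+n}$ expands as a one-sided Laurent series whose $z_1^0$-coefficient equals $(-\lambda)^{m+n}$ when $m+n\ge 0$ and $(-\bar\lambda)^{-(m+n)}$ when $m+n\le 0$. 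I would record these coefficients at the outset, since they will turn out to be exactly the spectral values in \eqref{ae}.

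Next I would establish compactness. Using the definition of $\mathcal{H}_a$ from the following section --- an anisotropic Hilbert space with the monomials $z_1^m z_2^n$ as an orthogonal basis and weights growing with $|m|,|n|$ --- the point is that $T$ maps the underlying complex neighbourhood of $\mathbb{T}^2$ strictly into itself, again a consequence of $|\lambda|<1$. Thus $f\circ T$ is analytic on a strictly larger annulus than $f$, so $\mathcal{C}$ factors as a bounded composition map $\mathcal{H}_a\to\mathcal{H}_{a'}$ with $a'>a$ followed by the canonical embedding $\mathcal{H}_{a'}\hookrightarrow\mathcal{H}_a$, which is nuclear because of the exponential gap between the two weights. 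Hence $\mathcal{C}$ is compact, its essential spectrum reduces to $\{0\}$, and every nonzero spectral point is an eigenvalue of finite algebraic multiplicity. In particular $0\in\sigma(\mathcal{C})$, while $1\in\sigma(\mathcal{C})$ via the invariant constant function $\mathcal{C}1=1$.

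For the nonzero spectrum I would filter $\mathcal{H}_a$ by the $z_1$-degree, split according to the sign of $m+n$, and show that $\mathcal{C}$ is block-triangular: the only matrix entry returning a monomial to itself sits where the source has $z_1$-degree $0$ and equals the $z_1^0$-coefficient above, whereas every other entry strictly raises the $z_1$-degree on the wing $m+n>0$ and strictly lowers it on the wing $m+n<0$. The off-diagonal part is therefore a weighted Volterra-type operator which I expect to be quasinilpotent on $\mathcal{H}_a$, so that $\sigma(\mathcal{C})\setminus\{0\}$ coincides with the set of self-entries $\{(-\lambda)^k:k\ge 1\}\cup\{(-\bar\lambda)^k:k\ge 1\}\cup\{1\}$, which is exactly \eqref{ae}. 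The associated eigenfunctions arise by correcting each monomial $z_2^{\pm k}$ through the convergent Neumann series built from the nilpotent remainder, and the algebraic multiplicity of a given value is obtained by counting the self-entries (the points on the $z_1$-axis) equal to it, which reproduces the number of occurrences in \eqref{ae}.

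The main obstacle is this last step on the infinite-dimensional weighted space: one must genuinely prove that the strictly-triangular remainder is quasinilpotent, and, more delicately, that the diagonal accounts for the full \emph{algebraic} rather than merely geometric multiplicities, so that no hidden Jordan blocks or accumulation phenomena inflate the spectrum. I expect this to hinge on quantitative decay estimates for the Laurent coefficients of $b(z_1)^{m+n}$ weighed against the norms defining $\mathcal{H}_a$; it is here that the explicit Blaschke structure, rather than any perturbative argument, does the real work.
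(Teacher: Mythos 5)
Your computational core is sound and matches the paper: the action $\mathcal{C}(z_1^{m}z_2^{n})=z_1^{m}b(z_1)^{m+n}z_2^{m+n}$, the observation that self-returning entries occur only at $z_1$-degree zero, and the values $1$, $(-\lambda)^k$, $(-\bar\lambda)^k$ of those entries are exactly the content of the paper's triangularity lemma (Lemma~\ref{lemc}). However, both of your structural steps have genuine gaps. First, compactness: the claim that $T$ maps a complex neighbourhood of $\mathbb{T}^2$ strictly into itself is false for a hyperbolic diffeomorphism. $T$ is invertible and area preserving; its complexification contracts such a neighbourhood along the stable direction while expanding it along the unstable one, so $f\circ T$ is analytic on $T^{-1}(U)$, which is not a larger neighbourhood --- the nesting picture you invoke is the one valid for \emph{expanding} maps, and its failure here is precisely why anisotropic spaces are needed at all. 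Relatedly, the one-parameter scale $\mathcal{H}_a$ is not nested: the ratio of the weights of $\mathcal{H}_a$ and $\mathcal{H}_{a'}$ is $\exp\bigl(2(a'-a)(|n_u|-|n_s|)\bigr)$, which is unbounded in both directions as $n$ ranges over $\mathbb{Z}^2$, so there is no canonical embedding $\mathcal{H}_{a'}\hookrightarrow \mathcal{H}_a$, nuclear or otherwise. A factorization argument could only be salvaged in a two-parameter scale with separate stable and unstable weights, and even then boundedness into the stronger space requires exactly the quantitative decay estimates for the Laurent coefficients $M_\ell(\lambda,k)$ of the Blaschke powers that you defer to the very end. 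The paper does this work directly: a contour-deformation estimate for $M_\ell(\lambda,k)$ (Lemma~\ref{lema}), a lower bound on the weight exponent (Appendix~\ref{appb}), and the resulting bound $\|\mathcal{C}e_n\|_a\le c\,e^{-\delta|n|}$ (Lemma~\ref{lemb}), which makes $\mathcal{C}$ Hilbert--Schmidt and hence compact (Proposition~\ref{prop:comp}).

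Second, the spectral step, which you yourself flag as the main obstacle, is left unresolved, and the route you propose would not work as stated: strict triangularity of an infinite matrix does \emph{not} imply quasinilpotence (the unilateral shift is strictly triangular with spectral radius one), and even granted a quasinilpotent remainder $\mathcal{N}$, the spectrum of the non-commuting sum $D+\mathcal{N}$ need not coincide with $\sigma(D)$; hidden Jordan structure and spectral inflation are exactly what must be excluded. The paper sidesteps all of this using the compactness it has already established: after reordering the basis, every finite section $P_N\mathcal{C}P_N$ (with $P_N$ the orthogonal projection onto $\mathrm{span}\{e_n:|n|\le N\}$) is a genuine finite upper-triangular matrix whose spectrum, with algebraic multiplicities, is read off the diagonal; since $\mathcal{C}$ is compact, $P_N\mathcal{C}P_N\to\mathcal{C}$ in operator norm, and classical spectral approximation theory for compact operators then transfers the nonzero eigenvalues together with their algebraic multiplicities to $\mathcal{C}$. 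If you insist on the global triangular picture, it can be completed by Ringrose's theorem on triangular forms of compact operators, but either way compactness must come first, and it must come from the hard coefficient estimates rather than from domain nesting.
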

Using the definition (\ref{ad}) and the invariance of Haar measure
$\mu$ on the unit torus
it is straightforward to relate the spectral properties of the operator with
correlation functions and to bound the decay of correlations for sufficiently
nice observables. 
\begin{cor}\label{cora}
For any functions $g: \mathbb{T}^2 \rightarrow \mathbb{C}$ and $h: \mathbb{T}^2 \rightarrow \mathbb{C}$
analytic in an open neighbourhood of the unit torus 
the corresponding correlation function
\begin{equation}\label{af}
C_{gh}(k) = \int g \cdot  h\circ T^k d \mu - \int g d\mu \int h d\mu\,,
\end{equation}
where $\mu$ denotes the invariant Haar measure on $\mathbb{T}^2$,
satisfies 
\begin{equation}
\limsup_{k\to \infty} |C_{gh}(k)|^{1/k} \leq |\lambda|\,.
\end{equation}
In particular, $T$ is strongly mixing with respect to $\mu$. 
\end{cor}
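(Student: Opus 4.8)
The plan is to express the correlation function directly through the composition operator and then exploit the complete spectral description furnished by Theorem~\ref{propa}. Since $T$ preserves $\mu$, the relation $h\circ T^k=\mathcal{C}^k h$ gives
\begin{equation*}
\int g\cdot(h\circ T^k)\,d\mu=\int g\cdot(\mathcal{C}^k h)\,d\mu=\ell_g(\mathcal{C}^k h),
\end{equation*}
where $\ell_g(\psi):=\int g\cdot\psi\,d\mu$. As $g$ is analytic in a neighbourhood of $\mathbb{T}^2$ it lies in $L^2(\mu)$, and since $\mathcal{H}_a$ embeds continuously into $L^2(\mu)$, the functional $\ell_g$ is bounded on $\mathcal{H}_a$. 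Likewise, analyticity of $h$ in a neighbourhood of the torus places $h\in\mathcal{H}_a$ for all sufficiently small $a>0$; I would fix such an $a$ at the outset. Everything then takes place inside the Hilbert space on which $\mathcal{C}$ is compact with known spectrum.

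Next I would isolate the peripheral part of the spectrum. By Theorem~\ref{propa} the only spectral point of modulus one is the simple eigenvalue $1$, every other nonzero eigenvalue having modulus at most $|\lambda|<1$. Let $P$ denote the Riesz spectral projection associated with the eigenvalue $1$; it is rank one with range spanned by the fixed function $\mathbf{1}$, since $\mathcal{C}\mathbf{1}=\mathbf{1}$. The key identification is $P\psi=\bigl(\int\psi\,d\mu\bigr)\mathbf{1}$, which I would obtain from the invariance of Haar measure: the identity $\int\mathcal{C}\psi\,d\mu=\int\psi\circ T\,d\mu=\int\psi\,d\mu$ shows that $\psi\mapsto\int\psi\,d\mu$ is the left eigenfunctional of $\mathcal{C}$ for the eigenvalue $1$, and for a simple eigenvalue the Riesz projection is the normalised tensor product of the right and left eigenvectors. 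Writing $\mathcal{C}_0:=\mathcal{C}(I-P)$ for the restriction to the complementary invariant subspace, Theorem~\ref{propa} yields the spectral radius $r(\mathcal{C}_0)=|\lambda|$.

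Combining these two steps gives, for each $k$,
\begin{equation*}
\ell_g(\mathcal{C}^k h)=\ell_g(Ph)+\ell_g\bigl(\mathcal{C}_0^k(I-P)h\bigr)=\int g\,d\mu\int h\,d\mu+\ell_g\bigl(\mathcal{C}_0^k(I-P)h\bigr),
\end{equation*}
so that $C_{gh}(k)=\ell_g\bigl(\mathcal{C}_0^k(I-P)h\bigr)$ and hence
\begin{equation*}
|C_{gh}(k)|\le\|\ell_g\|\,\bigl\|\mathcal{C}_0^k\bigr\|\,\|(I-P)h\|.
\end{equation*}
Taking $k$-th roots and applying the spectral radius formula $\lim_{k\to\infty}\bigl\|\mathcal{C}_0^k\bigr\|^{1/k}=r(\mathcal{C}_0)=|\lambda|$ gives $\limsup_{k\to\infty}|C_{gh}(k)|^{1/k}\le|\lambda|$. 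I expect the main obstacle to be purely book-keeping at the interface between the two norms: verifying that analytic $g,h$ genuinely sit in the relevant spaces and that the abstract Riesz projection coincides with the concrete averaging functional. Once the spectral picture of Theorem~\ref{propa} is in hand, the estimate itself is immediate.

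Finally, for strong mixing I would argue by density. The bound just proved shows $C_{gh}(k)\to0$ for every pair of functions analytic near $\mathbb{T}^2$, since $|\lambda|<1$. Because $T$ is measure-preserving the Koopman operator is an $L^2(\mu)$-isometry, so $\|h\circ T^k\|_{L^2}=\|h\|_{L^2}$ uniformly in $k$; the bilinear forms $(g,h)\mapsto C_{gh}(k)$ are therefore uniformly bounded on $L^2(\mu)\times L^2(\mu)$. As the analytic functions are dense in $L^2(\mu)$, an $\varepsilon/3$ approximation extends the decay $C_{gh}(k)\to0$ to all $g,h\in L^2(\mu)$, in particular to indicator functions, which is precisely strong mixing with respect to $\mu$.
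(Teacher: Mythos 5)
Your overall route --- expressing $C_{gh}(k)=\ell_g(\mathcal{C}^kh)-\ell_g(e_0)\ell_*(h)$, splitting off the rank-one spectral projection at the simple eigenvalue $1$, and applying the spectral radius formula to $\mathcal{C}_0=\mathcal{C}(I-P)$, whose non-zero spectrum is contained in $\{z:|z|\leq |\lambda|\}$ by Theorem~\ref{propa} --- is exactly the paper's proof, and your $\varepsilon/3$ density argument for strong mixing correctly fills in a step the paper leaves implicit. However, one step of your argument is genuinely false: $\mathcal{H}_a$ does \emph{not} embed continuously into $L^2(\mu)$, for any $a>0$. The weight in the norm \eqref{bd} is $\exp(-2a|n_u|+2a|n_s|)$, which is exponentially small for frequencies $n$ aligned with the unstable direction; hence $\mathcal{H}_a$ contains elements whose Fourier coefficients grow like $\exp(a|n_u|)$ along that direction. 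Such elements are genuine distributions (the paper emphasises that $\mathcal{H}_a$ contains ``generalised functions, not interpretable as ordinary functions on the torus''), and this non-embedding is not a technicality: it is the defining feature of anisotropic spaces. So your justification for the boundedness of $\ell_g$ on $\mathcal{H}_a$ collapses as stated.

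The repair is the paper's duality argument, which uses the analyticity of $g$ in an essential way. Writing $\ell_g(f)=\sum_{n\in\mathbb{Z}^2}g_{-n}f_n$ for Laurent polynomials $f$, the Cauchy--Schwarz inequality adapted to the weights gives
\begin{equation*}
|\ell_g(f)|^2 \leq \|f\|_a^2 \sum_{n\in\mathbb{Z}^2}|g_{-n}|^2\exp(2a|n_u|-2a|n_s|)\,,
\end{equation*}
and the series on the right converges because analyticity of $g$ yields $|g_n|\leq c\exp(-\gamma|n|)$ for some $\gamma>0$, provided $a$ is chosen small enough that $\exp(-\gamma|n|)\exp(a|n_u|-a|n_s|)$ decays exponentially in $|n|$. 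Thus $\ell_g$ is bounded on $\mathcal{H}_a$ because the exponential decay of the coefficients of the analytic observable $g$ dominates the exponential growth permitted inside $\mathcal{H}_a$, not because of any $L^2$ embedding. Note that this imposes an extra smallness condition on $a$, beyond the one from Lemma~\ref{analem} needed to place $h$ in $\mathcal{H}_a$; both conditions can be met simultaneously, after which the remainder of your argument goes through verbatim.
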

With a little bit more effort one can also derive asymptotic expansions
for the correlation function. In particular, the estimate given in 
Corollary~\ref{cora} is sharp as one can easily find cases 
where the upper bound is attained, see (\ref{fd}).

Another simple consequence of Theorem~\ref{propa} is the following 
result on the location of the Pollicott-Ruelle resonances 
(see \cite{Pol1,Pol2,Rue1,Rue2}) of $T$, that is, 
the poles of the meromorphic continuation of the Z-transform of the correlation
function. 
\begin{cor}
\label{corb}
For any 
$g: \mathbb{T}^2 \rightarrow \mathbb{C}$ and $h: \mathbb{T}^2 \rightarrow \mathbb{C}$
analytic in an open neighbourhood of the unit torus 
the Z-transform $\hat{C}_{gh}$ 
of the corresponding correlation function given by 
\begin{equation}
\hat{C}_{gh}(\zeta) = \sum_{k=0}^\infty \zeta^{-k} C_{gh}(k)
\end{equation}
for $\zeta\in\mathbb{C}$ with $|\zeta|>1$, 
has a meromorphic continuation to $\mathbb{C}\setminus \{0\}$ with
poles at 
\begin{equation}
\{ (-\lambda)^n: n\in\mathbb{N} \} \cup 
\{ (-\bar{\lambda})^n: n\in\mathbb{N} \}\,.
\end{equation}
\end{cor}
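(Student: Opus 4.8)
The plan is to rewrite the correlation function in terms of iterates of the composition operator and then read off the continuation from the resolvent of $\mathcal{C}$, exploiting that $\mathcal{C}$ is compact by Theorem~\ref{propa}. First I would record that $h\circ T^k=\mathcal{C}^k h$, so that
\[
C_{gh}(k)=\int g\cdot \mathcal{C}^k h\,d\mu-\int g\,d\mu\int h\,d\mu .
\]
Since the Haar measure $\mu$ is $T$-invariant, the functional $f\mapsto\int f\,d\mu$ satisfies $\int \mathcal{C}f\,d\mu=\int f\,d\mu$, so it is a left eigenfunctional of $\mathcal{C}$ for the eigenvalue $1$; together with $\mathcal{C}1=1$ and the fact that $1$ is a simple eigenvalue (Theorem~\ref{propa}), this identifies the associated rank-one spectral projection as $Pf=(\int f\,d\mu)\,1$. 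Writing $Q=\mathrm{Id}-P$ for the complementary spectral projection and using $\mathcal{C}^k Ph=(\int h\,d\mu)\,1$, one checks
\[
C_{gh}(k)=\int g\cdot\mathcal{C}^k Qh\,d\mu ,\qquad k\ge 0 .
\]

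For the second step I would restrict attention to $\mathcal{C}_Q:=\mathcal{C}Q=Q\mathcal{C}$, whose nonzero spectrum is $\sigma(\mathcal{C})\setminus\{1\}=\{(-\lambda)^n:n\ge1\}\cup\{(-\bar{\lambda})^n:n\ge1\}$ and whose spectral radius is therefore $|\lambda|$. Since $\mathcal{C}^k Qh=\mathcal{C}_Q^k(Qh)$ for all $k\ge0$, summing the geometric series gives, for $|\zeta|>|\lambda|$,
\[
\hat{C}_{gh}(\zeta)=\sum_{k=0}^\infty\zeta^{-k}\int g\cdot\mathcal{C}_Q^k(Qh)\,d\mu=\zeta\int g\cdot R(\zeta,\mathcal{C}_Q)\,Qh\,d\mu ,
\]
where $R(\zeta,\mathcal{C}_Q)=(\zeta\,\mathrm{Id}-\mathcal{C}_Q)^{-1}$ and the interchange of summation with the bounded functional $f\mapsto\int g\cdot f\,d\mu$ is justified by the operator-norm convergence of the Neumann series for $\mathcal{C}_Q$ on this region. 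By Corollary~\ref{cora} the defining series for $\hat{C}_{gh}$ already converges for $|\zeta|>|\lambda|$, a region containing $\{|\zeta|>1\}$, so this identity holds throughout the domain on which $\hat{C}_{gh}$ is a priori defined.

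The third step is to invoke the Riesz theory of the compact operator $\mathcal{C}_Q$: the map $\zeta\mapsto R(\zeta,\mathcal{C}_Q)$ is operator-valued meromorphic on $\mathbb{C}\setminus\{0\}$ with poles exactly at the nonzero eigenvalues of $\mathcal{C}_Q$. Consequently $\zeta\mapsto\zeta\int g\cdot R(\zeta,\mathcal{C}_Q)Qh\,d\mu$ furnishes a meromorphic continuation of $\hat{C}_{gh}$ to $\mathbb{C}\setminus\{0\}$ with all poles contained in $\{(-\lambda)^n:n\ge1\}\cup\{(-\bar{\lambda})^n:n\ge1\}$, as claimed; note that the pole of $R(\zeta,\mathcal{C})$ at the eigenvalue $1$ has already been removed by the projection $Q$, and that the essential singularity at $\zeta=0$ reflects the accumulation of the eigenvalues there.

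The main obstacle I anticipate is not the spectral-theoretic skeleton above but the functional-analytic bookkeeping on $\mathcal{H}_a$: I must ensure that $h$ lies in $\mathcal{H}_a$ and that $f\mapsto\int g\cdot f\,d\mu$ defines a bounded linear functional on $\mathcal{H}_a$, which ties the admissible width of the neighbourhood of analyticity of $g$ and $h$ to the parameter $a$ through the definition of $\mathcal{H}_a$ given in the following section. These embedding and continuity facts are precisely what is needed to make sense of $C_{gh}(k)$ in Corollary~\ref{cora}, so I would reuse that analysis verbatim; once it is in place, the identity theorem guarantees that the resolvent representation agrees with the defining series on their common domain, making the continuation unique.
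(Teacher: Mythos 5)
Your proposal is correct and takes essentially the same route as the paper: the paper likewise forms the rank-one spectral projection $\mathcal{P}f=\bigl(\int f\,d\mu\bigr)e_0$, writes $C_{gh}(k)=\ell_g(\mathcal{C}^k h-\mathcal{P}h)=\ell_g(\mathcal{R}^k h)$ with $\mathcal{R}=\mathcal{C}-\mathcal{P}$ (your $\mathcal{C}_Q$), sums the Neumann series to get $\hat{C}_{gh}(\zeta)=\zeta\,\ell_g((\zeta I-\mathcal{R})^{-1}h)$, and concludes from the meromorphy of the resolvent of a compact operator on $\mathbb{C}\setminus\{0\}$. The only difference is cosmetic: you apply the resolvent to $Qh$ rather than $h$, which if anything cleans up the $k=0$ term of the series.
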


This article is organised as follows. 
In Section~\ref{sec2} we will define the function 
space $\mathcal{H}_a$ on which the 
transfer operator (\ref{ad}) will be defined. We will spend some effort 
on its motivation, as 
its structure is fundamentally
linked to the physics of the underlying dynamical system. 
Compactness of the 
composition operator will be proven in 
Section~\ref{sec3} by 
establishing suitable 
bounds on the entries of a matrix 
representation of $\mathcal{C}$ with respect to an orthonormal basis 
of $\mathcal{H}_a$. Using the fact that this matrix representation is 
upper-triangular 
we will then be able to 
obtain the entire spectrum of $\mathcal C$ in closed form, 
thus completing the proof of our main result, Theorem~\ref{propa}. 

Section~\ref{sec4} is devoted to proving the two corollaries, which
involves a discussion of the properties of the 
invariant measure and the corresponding correlation decay for analytic 
observables. 

Part of our presentation requires some basic knowledge of 
functional analysis, which, in spite of the fact that it 
can be found in standard textbooks, we cover in some detail 
so as to make the 
exposition accessible to a larger audience 
in applied dynamical systems theory. 

In this article, we shall only be concerned with the particular
example 
given in (\ref{aa}), postponing 
the discussion of possible generalisations to the conclusion and 
Appendix~\ref{appc}.

\section{Hilbert space and transfer operator}\label{sec2}
The main purpose of this section is to introduce a family of Hilbert
spaces and to show that the composition operator (\ref{ad}) is compact
on each of these spaces. 

We start by defining the family of Hilbert spaces. 
For $\lambda=0$ the map given by (\ref{aa}) or (\ref{ab}) reduces to a 
linear automorphism 
on the torus.
The corresponding unstable/stable eigenvalues and eigenvectors are given by
\begin{equation}\label{ba}
\lambda_{u/s}=\p^{\pm 2}, \quad v_{u/s}=(\lambda_{u/s}-1,1) \,,
\end{equation}
where $\p:=(1+\sqrt{5})/2$ denotes the golden mean. 
For brevity we will use multi-index notation $n=(n_1,n_2)\in \mathbb{Z}^2$ with 
$|n|=|n_1|+|n_2|$, and we abbreviate the monomials of $z=(z_1,z_2)\in
\mathbb{C}^2$ by $z^n=z_1^{n_1} z_2^{n_2}$. Let us denote by
\begin{equation}\label{bb}
n_{u/s}=v_{u/s} n = (\lambda_{u/s}-1)n_1+n_2
\end{equation}
the components of $n$ with respect to the stable and unstable direction of 
the cat map. 

Before defining the family of spaces recall that a 
\emph{Laurent polynomial} is a finite linear combination of monomials
$z\mapsto z^n$ where $n\in \mathbb{Z}^2$. The set of all Laurent
polynomials will be denoted by $\mathcal{L}$. Thus 
\begin{equation}
\label{calL}
\mathcal{L}=\{f:\mathbb{T}^2\to \mathbb{C}: 
f(z)=\sum_{|n|\leq N}f_nz^n, 
\text{ with $f_n\in\mathbb{C}, N\in \mathbb{N}$} \}\,.
\end{equation}
Motivated by \cite{FaRo_NONL06}, we will define
anisotropic Hilbert spaces adapted to the action of the transfer
operator given by (\ref{ad}) as the completion of the set of Laurent polynomials
with respect to a certain norm, which we shall define presently. 

Given $a>0$ define an inner product on $\mathcal{L}$ by 
\begin{equation}
\label{Haip}
\langle f, g  \rangle_a = 
\sum_{n \in \mathbb{Z}^2} f_n \bar{g}_n \exp(-2a|n_u|+2 a |n_s|) \,,
\end{equation}
where $(f_n)_{n\in \mathbb{Z}^2}$ and 
$(g_n)_{n\in \mathbb{Z}^2}$ denote the Fourier coefficients of the
Laurent polynomials $f$ and $g$, respectively, that is, 
\begin{equation}
f(z)=\sum_{n\in \mathbb{Z}^2} f_nz^n \text{ and }
g(z)=\sum_{n\in \mathbb{Z}^2} g_nz^n\,;
\end{equation}
the corresponding norm will be denoted by 
$\|\cdot\|_a$, that is, we have 
\begin{equation}\label{bd}
\| f \|_a^2 = \sum_{n \in \mathbb{Z}^2} |f_n|^2 \exp(-2a|n_u|+2 a
|n_s|)\,. 
\end{equation}
We are now ready to define the family of Hilbert spaces. 
\begin{defn}\label{defa}
Let $a>0$ then $\mathcal{H}_a$ is the completion of $\mathcal{L}$ with
respect to the norm $\|\cdot \|_a$. 
\end{defn}
It turns out that $\mathcal{H}_a$ is a separable 
Hilbert space, which, by construction, contains all Laurent
polynomials as a dense subset (see, for example,
\cite[Theorem~I.3]{RS}). However, it also contains functions analytic
in a sufficiently large open neighbourhood of the torus as the following
lemma shows. 
\begin{lem}
\label{analem}
If $a>0$ and $f:\mathbb{T}^2\to \mathbb{C}$ is analytic in an open 
neighbourhood of 
\begin{equation}
\label{annulus}
\{ \exp(-\sqrt{5}a)\leq |z_1| \leq \exp(\sqrt{5}a) \} \times 
\{ |z_2| =1 \}\,,
\end{equation}
then $f\in \mathcal{H}_a$. In particular, any function analytic on an
open 
neighbourhood of the torus belongs to $\mathcal{H}_a$ for all
sufficiently small $a$.  
\end{lem}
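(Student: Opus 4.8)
The plan is to combine an elementary algebraic identity relating the stable and unstable coordinates with standard Cauchy estimates for the Laurent coefficients of an analytic function. The whole point is that the exponential weight defining $\|\cdot\|_a$ in (\ref{bd}) only penalises growth of the Fourier coefficients in the $n_1$-direction, and does so at a rate exactly calibrated to the width $\sqrt{5}\,a$ of the annulus in (\ref{annulus}).

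First I would record the key identity $n_u-n_s=(\lambda_u-\lambda_s)\,n_1=\sqrt{5}\,n_1$, which follows at once from (\ref{bb}) together with $\lambda_{u/s}=\p^{\pm2}$ and $\p^{2}-\p^{-2}=\sqrt{5}$. Writing $|n_s|=|n_u-\sqrt{5}\,n_1|\leq |n_u|+\sqrt{5}\,|n_1|$ then yields the crucial weight bound
\[
\exp(-2a|n_u|+2a|n_s|)\leq \exp\bigl(2\sqrt{5}\,a\,|n_1|\bigr)\,,
\]
valid for every $n\in\mathbb{Z}^2$. Thus the weight is controlled by a pure exponential in $|n_1|$, with no contribution from $|n_2|$, which mirrors the fact that (\ref{annulus}) is a genuine annulus only in the $z_1$-variable.

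Next I would estimate the Laurent coefficients $f_n$ of $f$. Since the set in (\ref{annulus}) is compact and $f$ is analytic on an open neighbourhood of it, a tube argument produces $\epsilon>0$ and $\eta>0$ such that $f$ is holomorphic on a neighbourhood of the closed polyannulus $\{\exp(-\sqrt{5}a-\epsilon)\leq|z_1|\leq\exp(\sqrt{5}a+\epsilon)\}\times\{\exp(-\eta)\leq|z_2|\leq\exp(\eta)\}$ and bounded there by some $M$. Applying the Cauchy integral formula for Laurent coefficients on circles of radius $\exp(\pm(\sqrt{5}a+\epsilon))$ in $z_1$ and $\exp(\pm\eta)$ in $z_2$, with the sign chosen according to the signs of $n_1$ and $n_2$, gives
\[
|f_n|\leq M\exp\bigl(-(\sqrt{5}a+\epsilon)|n_1|-\eta|n_2|\bigr)\,.
\]
The gain over the critical rate $\sqrt{5}a$ is precisely the strict $\epsilon>0$ afforded by analyticity on an \emph{open} neighbourhood of the closed annulus; this is the step where the hypothesis is used in an essential way.

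Combining the two estimates, I would bound
\[
\|f\|_a^2=\sum_{n\in\mathbb{Z}^2}|f_n|^2\exp(-2a|n_u|+2a|n_s|)\leq M^2\sum_{n\in\mathbb{Z}^2}\exp(-2\epsilon|n_1|-2\eta|n_2|)\,,
\]
and the right-hand side is a convergent product of two geometric series, so $f\in\mathcal{H}_a$. Finally, for the ``in particular'' assertion, given $f$ analytic on an open neighbourhood of $\mathbb{T}^2$ I would use compactness to find $\beta>0$ with $f$ holomorphic on the polyannulus of radii $\exp(\pm\beta)$ about $\mathbb{T}^2$; then for any $a<\beta/\sqrt{5}$ the set (\ref{annulus}) lies in the interior of this polyannulus, and the first part applies. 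The only mild obstacle is the bookkeeping in the Cauchy estimates and the tube argument; the substance of the lemma is the calibration in the first step, namely that $n_u-n_s$ is a multiple of $n_1$ alone, with the factor $\sqrt{5}$ matching the annulus width.
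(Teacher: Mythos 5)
Your proof is correct and takes essentially the same approach as the paper's: both hinge on the identity $n_u - n_s = \sqrt{5}\,n_1$ from (\ref{bb}) (giving $|n_s|-|n_u|\le \sqrt{5}\,|n_1|$) combined with Cauchy--Laurent coefficient estimates showing the Fourier coefficients decay strictly faster than the critical rate $\exp(-\sqrt{5}\,a|n_1|)$. The only differences are presentational: you make the pointwise coefficient bound, the tube argument, and the ``in particular'' clause explicit, whereas the paper simply asserts the resulting square-summability (\ref{bounded}) and leaves those details implicit.
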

\begin{proof}
Suppose that $f$ is analytic on an open neighbourhood of 
the poly-annulus (\ref{annulus}). Then, using
Cauchy's integral formula, the function $f$ has
a Laurent expansion of the form 
\begin{equation} 
f(z)=\sum_{n\in \mathbb{Z}^2}f_nz^n
\end{equation} 
with  
\begin{equation}
\label{bounded}
\sum_{n\in \mathbb{Z}^2}|f_n|^2 \exp(2a\sqrt{5}|n_1|)<\infty\,.
\end{equation}
Since, by (\ref{bb}), we have 
\begin{equation}
|n_s|-|n_u|\leq |n_s-n_u|=\sqrt{5}|n_1|\,,
\end{equation}
the bound (\ref{bounded}) implies that 
\begin{multline}
\sum_{n\in \mathbb{Z}^2}|f_n|^2 \exp(-2a|n_u|+2a|n_s|)\\
=\sum_{n\in \mathbb{Z}^2}|f_n|^2\exp(2a\sqrt{5}|n_1|) 
\exp(-2a|n_u|+2a|n_s|-2a\sqrt{5}|n_1|)<\infty\,,
\end{multline}
which shows that $f$ is a limit of Laurent polynomials convergent in the 
norm $\|\cdot\|_a$ and can thus be uniquely identified with an
element in $\mathcal{H}_a$. 
\end{proof}
While, as we have just seen, the space $\mathcal{H}_a$ contains
functions analytic on a sufficiently large neighbourhood of the torus,
it also contains generalised functions, 
not interpretable as ordinary functions on the torus.

At first glance, the choice of weighting in the definition of the norm
(\ref{bd}) appears peculiar. 
However, this choice is intimately linked with the underlying 
dynamics. Broadly speaking, 
the weighting requires that the Fourier
coefficients $(f_n)_{n\in \mathbb{Z}^2}$ 
of $f\in \mathcal{H}_a$ decay exponentially in the 
stable direction whereas they are 
allowed to grow exponentially in the unstable direction. 
The corresponding function
on the unit torus inherits this behaviour, that is, it is smooth in the 
stable but allowed to be rather rough in the unstable direction. 
It is precisely this property 
which makes it possible 
to capture the dynamics of the underlying map. For instance, the
simple 
textbook 
example of the
Arnold cat map shows that an initially smooth density remains smooth 
along the unstable direction but
becomes jagged in the stable direction. If we keep in mind that a 
Perron-Frobenius operator 
governing the motion of densities involves the inverse of the map and thus 
interchanges stable 
and unstable direction it is precisely the space defined above which 
is able to capture the 
ergodic properties of the dynamical system. In physics terms, the 
structure of this space 
breaks the time reversal symmetry of the dynamical system, capturing
the macroscopically  
irreversible behaviour of the motion \cite{AnTs_PA92}. 

As in \cite{FaRo_NONL06}, 
we could have used a slightly more general setup for the underlying space  
by giving different weights to the stable and unstable parts in 
Definition~\ref{defa}.  
The restricted 
case considered here
will turn out to be sufficient for our purpose. 
We will revisit this issue in the conclusion.

For later use, we note that the normalised monomials 
\begin{equation}\label{bg}
e_n(z)=z^n  \exp(a|n_u|- a |n_s|) \quad (\forall n\in \mathbb{Z}^2) \,,
\end{equation}
yield an orthonormal basis
for $\mathcal{H}_a$ for every $a>0$. 

Having introduced the underlying Hilbert space we are now going to 
define a 
transfer operator associated with the map. 
The definition (\ref{ad}) makes perfect sense for 
Laurent polynomials, which form a dense subset of
$\mathcal{H}_a$. Hence, 
it remains to show that
$\mathcal{C}$ is bounded with respect to the norm of $\mathcal{H}_a$ 
on the set of Laurent polynomials.  For this in turn,  
it is sufficient
to evaluate the images (under $\mathcal{C}$) of the basis elements
(\ref{bg}) 
and to 
show that their norm decays
sufficiently fast with $|n|$. 

We start by observing that (\ref{ad}) and (\ref{bg}) yield
\begin{align}\label{bh}
(\mathcal{C}e_n)(z) =&  \exp(a|n_u|- a |n_s|) z_1^{n_1} z_2^{n_1+n_2} \left(
\frac{z_1-\lambda}{1-\bar{\lambda} z_1}\right)^{n_1+n_2}\nonumber \\
=& \exp(a|n_u|- a |n_s|)  \sum_{m\in \mathbb{Z}^2} b_{n,m} z^m\,,
\end{align}
where the expansion coefficients of the Laurent series in a neighbourhood of the
unit torus are given by 
\begin{equation}\label{bi}
b_{n,m}=\delta_{n_1+n_2,m_2} M_{-m_1+2n_1+n_2}(\lambda,n_1+n_2) \,, 
\end{equation}
where we have introduced the shorthand 
\begin{equation}\label{bj}
M_\ell(\lambda,k)= \int_{|\zeta|=1}
\zeta^\ell \left(\frac{1-\lambda/\zeta}{1-\bar{\lambda} \zeta}\right)^k 
\frac{d\zeta}{2 \pi i \zeta}
\end{equation}
for the expansion coefficient of a single Blaschke factor.
Hence, using the definition of the norm in (\ref{bd}), we obtain
\begin{align}\label{bk}
\| \mathcal{C} e_n \|_a^2 =& \sum_{m\in \mathbb{Z}^2} 
\exp(2 a|n_u|- 2 a |n_s|) \delta_{n_1+n_2,m_2}
|M_{-m_1+2 n_1+n_2}(\lambda,n_1+n_2)|^2 \nonumber \\
 & \qquad \times \exp(-2a|m_u|+2 a |m_s|) \, .
\end{align}
Before we proceed, let us first comment on the trivial case 
of the cat map, which corresponds to $\lambda=0$. 
In this case, expression (\ref{bj}) simplifies to $M_{\ell}(\lambda, k)=\delta_{\ell, 0}$ and only 
the term
$m_1=2n_1+n_2$, $m_2=n_1+n_2$ contributes to the series in (\ref{bk}). 
Thanks to (\ref{bb}), that is, thanks to the stable and unstable directions of the cat map
this gives $m_{u/s}=\lambda_{u/s} n_{n/s}$ and so (\ref{bk}) becomes 
\begin{equation}
\label{lambdanull}
\|\mathcal{C}e_n\|_a^2=\exp(-2a(\lambda_u-1)|n_u|-2a(1-\lambda_s) |n_s|)\,.
\end{equation}
Since all norms in $\mathbb{R}^2$ are equivalent, we see that 
in this simple case there is a $\delta>0$, such that 
\begin{equation}
\label{lambdanulldelta}
\|\mathcal{C}e_n\|_a\leq \exp(-\delta |n|)\quad (\forall n \in \mathbb{Z}^2)\,,
\end{equation}
that is, we end up with an upper bound, which is exponentially small in $|n|$. This in turn finally guarantees 
that the transfer operator $\mathcal{C}$ is well-defined and compact on $\mathcal{H}_a$, using a simple summability argument (see the proof of Proposition~\ref{prop:comp}). 

The same observation together with a localisation argument for the expression
(\ref{bj}) has been used in \cite{FaRo_NONL06} to derive 
similar upper bounds for the transfer operators of maps which are small perturbations (in the $C^1$ sense) 
of linear maps of the torus. Restricting to our particular choice of maps, 
we will obtain a slightly stronger result without resorting to any perturbative argument. 

In order to do this, let us first focus on an estimate for the expression (\ref{bj}). 
\begin{lem}\label{lema}
For any $\lambda=|\lambda| \exp(i\gamma)\in \mathbb{C}$ with $|\lambda|<1$ 
the expression (\ref{bj}) obeys
\begin{itemize}
\item[i)] $M_\ell(\lambda,k)=M_{-\ell}(\bar{\lambda},-k)$;
\item[ii)] $M_\ell(|\lambda|\exp(i\gamma),k)=\exp(i\ell\gamma) M_\ell(|\lambda|,k)$;
\item[iii)] $M_\ell(\lambda,0)=\delta_{\ell,0}$;
\item[iv)] $M_\ell(\lambda,k)=0$ if $\ell>k>0$;
\item[v)] $|M_\ell(\lambda,k)|\leq 1$.
\end{itemize}
In addition, there exists $\alpha>0$ and $\beta\in (0,1)$ such that for $k>0$ and
$\beta k \leq \ell \leq k$ the estimate
\begin{equation}\label{bl}
|M_\ell(\lambda,k)|\leq \exp(-\alpha(\ell-\beta k))
\end{equation}
holds.
\end{lem}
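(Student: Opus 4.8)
The plan is to read off the elementary properties (i)--(v) directly from the integral representation (\ref{bj}), and then to obtain the quantitative estimate (\ref{bl}) by a Cauchy estimate on a circle of radius slightly below one.

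I would first rewrite the integrand in terms of the single Blaschke factor $B(\zeta)=(\zeta-\lambda)/(1-\bar{\lambda}\zeta)$, which is holomorphic on a neighbourhood of the closed unit disc because $|\lambda|<1$. Since $(1-\lambda/\zeta)/(1-\bar{\lambda}\zeta)=\zeta^{-1}B(\zeta)$, formula (\ref{bj}) becomes
\[
M_\ell(\lambda,k)=\frac{1}{2\pi i}\int_{|\zeta|=1}\zeta^{\ell-k-1}B(\zeta)^k\,d\zeta,
\]
so $M_\ell(\lambda,k)$ is the Taylor coefficient of $\zeta^{k-\ell}$ in $B(\zeta)^k$. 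Part (iii) is then immediate, and (iv) follows because for $\ell>k$ the integrand is holomorphic inside $|\zeta|=1$, so Cauchy's theorem gives $M_\ell(\lambda,k)=0$. The symmetry (i) follows from the substitution $\zeta\mapsto 1/\zeta$ (equivalently $\theta\mapsto-\theta$ with $\zeta=e^{i\theta}$), and the rotation covariance (ii) from $\zeta\mapsto e^{i\gamma}\zeta$. For (v) I would use that $B$ is inner, i.e.\ $|B(\zeta)|=1$ on $|\zeta|=1$; hence the integrand has modulus one there, while $|d\zeta/(2\pi i\zeta)|$ has total mass one, so $|M_\ell(\lambda,k)|\le 1$.

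For the decay bound (\ref{bl}) I would invoke (ii) to reduce to $\lambda=|\lambda|=:r$, since then $|M_\ell(\lambda,k)|=|M_\ell(r,k)|$ (the case $r=0$ being trivial by (iii)). Applying the Cauchy estimate for the $(k-\ell)$-th Taylor coefficient on the circle $|\zeta|=R$ with $0<R<1$ yields
\[
|M_\ell(r,k)|\le R^{\ell-k}\Big(\max_{|\zeta|=R}|B(\zeta)|\Big)^{k}.
\]
A short computation, maximising $|B(\zeta)|^2$ over $\zeta=Re^{i\theta}$ (the extremum sitting at $\theta=\pi$), gives $\max_{|\zeta|=R}|B(\zeta)|=(R+r)/(1+rR)<1$. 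Taking logarithms and writing $\ell=xk$, the claim reduces to finding a single $R\in(0,1)$ and constants $\alpha>0$, $\beta\in(0,1)$ (all allowed to depend on $r$) such that the affine function $\Psi(x)=(x-1)\ln R+\ln\frac{R+r}{1+rR}$ satisfies $\Psi(x)\le-\alpha(x-\beta)$ for $\beta\le x\le 1$.

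The main obstacle is precisely this uniform comparison. Since both sides are affine in $x$, it suffices to check the two endpoints. The endpoint $x=1$ requires $\alpha\le-\ln\frac{R+r}{1+rR}\big/(1-\beta)$, while the endpoint $x=\beta$ demands $(1-\beta)(-\ln R)\le-\ln\frac{R+r}{1+rR}$, an inequality free of $\alpha$. The key point is that $-\ln\frac{R+r}{1+rR}\big/(-\ln R)\to(1-r)/(1+r)$ as $R\to 1^-$; hence any $\beta\in\big(\tfrac{2r}{1+r},1\big)$, a nonempty interval since $r<1$, admits $R<1$ close enough to one for which the $x=\beta$ inequality holds. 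Setting $\alpha$ equal to the bound from the $x=1$ endpoint, the affine function $\Psi(x)+\alpha(x-\beta)$ vanishes at $x=1$ and is nonpositive at $x=\beta$, hence nonpositive on $[\beta,1]$. Re-exponentiating $\Psi(x)\le-\alpha(x-\beta)$ and using $\ln|M_\ell(r,k)|\le k\Psi(\ell/k)$ gives $|M_\ell(r,k)|\le\exp(-\alpha(\ell-\beta k))$, which is (\ref{bl}).
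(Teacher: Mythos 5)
Your proposal is correct and follows essentially the same route as the paper: properties i)--v) via the substitutions $\zeta\mapsto 1/\zeta$ and $\zeta\mapsto e^{i\gamma}\zeta$, holomorphy, and unimodularity of the Blaschke factor on the circle, and the decay bound via deformation to a circle of radius $R<1$ with the maximum of the integrand attained at $\theta=\pi$. Your endpoint/affine comparison after taking logarithms is just a repackaging of the paper's argument that $F(r)=r^\beta(1+|\lambda|/r)/(1+|\lambda|r)$ satisfies $F(1)=1$ and $F'(1)>0$ precisely when $\beta>2|\lambda|/(1+|\lambda|)$ (your limit $(1-r)/(1+r)$ is the same derivative computation), so the two proofs coincide in substance.
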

\begin{proof}
The symmetry properties i) and ii) can be obtained by appropriate 
substitutions
in the integral (\ref{bj}), namely $\zeta'=\zeta^{-1}$ and 
$\zeta'=\zeta \exp(-i\gamma)$,
respectively. Property iii) is obvious. Since the integrand in (\ref{bj}) 
is holomorphic
in the unit disk for  $\ell>k>0$, property iv) follows. Finally v) is 
obvious, as the integrand
is bounded by one. Hence, the only non-trivial part which remains to be 
proven is the  estimate (\ref{bl}). 

Due to the phase symmetry ii) it is sufficient to prove (\ref{bl}) 
with $|\lambda|$ instead
of $\lambda$. 
By contour deformation we have for $r\in (0,1)$
\begin{align}\label{bm}
|M_\ell(\lambda,k)| =& \left| \int_{|\zeta|=r}
\zeta^\ell \left(\frac{1-|\lambda|/\zeta}{1-|\lambda| \zeta}\right)^k 
\frac{d\zeta}{2 \pi i   \zeta} \right| \nonumber \\
\leq& \frac{r^\ell}{2 \pi} \int_0^{2 \pi} \left|
\frac{1-|\lambda|/r \exp(-i\phi)}{1-|\lambda| r \exp(i\phi)} \right|^k
d \phi \, .
\end{align}
It is not difficult to see that the integrand takes 
its maximum at $\phi=\pi$, that is, 
\begin{equation}\label{bn}
\left|
\frac{1-|\lambda|/r \exp(-i\phi)}{1-|\lambda| r \exp(i\phi)} \right|
\leq \frac{1+|\lambda|/r}{1+|\lambda| r}
\quad (\forall \phi \in [0,2\pi))
\end{equation}
so that for any $\beta\in (0,1)$ we have
\begin{equation}\label{bo}
|M_\ell(\lambda,k)| \leq r^{\ell-\beta k} 
\left(r^\beta \frac{1+|\lambda|/r}{1+|\lambda| r}\right)^k \, .
\end{equation}
The base $F(r)=r^\beta (1+|\lambda|/r)/(1+|\lambda| r)$ clearly obeys
$F(1)=1$ and $F'(1)>0$ if $\beta>2 |\lambda|/(1+|\lambda|)$. Hence,
the assertion follows by first choosing $\beta \in (2 |\lambda|/(1+|\lambda|), 1)$ 
and then choosing $r=\exp(-\alpha)\in(0,1)$ with $F(r)\leq 1$. 
\end{proof}
Let us now return to (\ref{bk}). Using Lemma~\ref{lema}
it is fairly straightforward to establish the following. 
\begin{lem}\label{lemb}
For $\lambda\in\mathbb{C}$ with $|\lambda|<1$ and $a>0$
there exists $c>0$ and $\delta>0$ such that
\begin{equation}\label{bp}
\| {\cal C} e_n \|_a \leq c \exp(-\delta |n|)
\quad (\forall n\in \mathbb{Z}^2)\,.
\end{equation}
\end{lem}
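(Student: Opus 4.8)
The plan is to start from the explicit expression (\ref{bk}) and turn it into a one‑parameter sum. The Kronecker delta $\delta_{n_1+n_2,m_2}$ fixes $m_2=n_1+n_2=:k$, so only $m_1$, equivalently $\ell:=-m_1+2n_1+n_2$, is summed over. By the symmetry i) of Lemma~\ref{lema} we may assume $k\geq 0$, and the case $k=0$ is immediate from iii), since then only the $\ell=0$ term survives and (\ref{bk}) collapses to the clean estimate (\ref{lambdanull}). For $k>0$ I would rewrite the weight: using (\ref{bb}) and the fact that the linear (cat‑map) image corresponds to $\ell=0$ with $m_{u/s}=\lambda_{u/s}n_{u/s}$, linearity in $\ell$ gives $m_u(\ell)=\varphi^2 n_u-\varphi\ell$ and $m_s(\ell)=\varphi^{-2}n_s+\varphi^{-1}\ell$, so that (\ref{bk}) becomes $\|\mathcal{C}e_n\|_a^2=\sum_{\ell\leq k}|M_\ell(\lambda,k)|^2\exp(E(\ell))$ with
\begin{equation}
E(\ell)=2a|n_u|-2a|n_s|-2a|\varphi^2 n_u-\varphi\ell|+2a|\varphi^{-2}n_s+\varphi^{-1}\ell|\,,
\end{equation}
the restriction $\ell\leq k$ coming from property iv).

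Next I would study $E$ as a function of the real variable $\ell$. It is piecewise linear with breakpoints at $\ell=\varphi n_u$ and $\ell=-\varphi^{-1}n_s$; computing the one‑sided slopes (which take the values $\pm 2a\sqrt5$ and $\pm 2a$, using $\varphi+\varphi^{-1}=\sqrt5$ and $\varphi-\varphi^{-1}=1$) shows that $E$ increases up to $\ell^\ast:=\varphi n_u$ and decreases thereafter, so that $E$ attains its unique maximum at $\ell^\ast$ and decays at least linearly, with slope of modulus $\geq 2a$, away from $\ell^\ast$. The decisive observation is that, for $k\geq 0$, either $\ell^\ast>k$, in which case the corresponding terms are absent by iv), or $\ell^\ast\leq k$, in which case a careful sign‑sensitive evaluation of $E(\ell^\ast)$ (expressing $n_1,n_2,k$ through $n_u,n_s$ and simplifying with the golden‑mean identities) yields $E(\ell^\ast)\leq 0$. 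Thus the weight never produces net growth at the terms that are actually present.

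With these facts I would split the sum at $\ell=\beta k$. On the window $\beta k\leq \ell\leq k$ I use the sharp bound (\ref{bl}), which supplies the decay in $k$ (hence in $|n|$) precisely near the endpoint $\ell=k$ where $E(\ell^\ast)$ degenerates to $0$; on the complementary range $\ell<\beta k$ I use only $|M_\ell(\lambda,k)|\leq 1$ together with the unimodality of $E$ and the bound $E(\ell)\leq E(\ell^\ast)-2a|\ell-\ell^\ast|$. Factoring out a common decay factor — obtained by noting that $\varphi|n_u|+\varphi^{-1}|n_s|\geq \delta_0|n|$ for some $\delta_0>0$, since $(n_1,n_2)\mapsto(n_u,n_s)$ is a linear isomorphism and all norms on $\mathbb{R}^2$ are equivalent — reduces the remainder to a geometric series in $|\ell-\ell^\ast|$ that converges uniformly in $n$. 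Collecting the bounds yields (\ref{bp}) with suitable $c>0$ and $\delta>0$.

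The main obstacle I anticipate is the bookkeeping of the several sign configurations of $n_u$ and $n_s$, equivalently the position of $\ell^\ast=\varphi n_u$ relative to $0$ and to $k$, and verifying that a single $\delta>0$ works for every configuration and, crucially, for every $a>0$. The subtlety is that the Blaschke decay rate $\alpha$ in (\ref{bl}) is fixed, depending only on $\lambda$, while $a$ is arbitrary; hence for large $a$ the required decay in $|n|$ cannot be sourced from (\ref{bl}) alone but must come from the weight $E$ itself near its maximum, with (\ref{bl}) only guaranteeing summability and covering the boundary $\ell=k$ where $E(\ell^\ast)\to 0$. Making this division of labour precise across all regimes — and in particular establishing $E(\ell^\ast)\leq 0$ by exact cancellation rather than by a lossy triangle inequality — is the crux of the argument.
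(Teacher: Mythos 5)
Your reduction of \eqref{bk} to a one-parameter sum, the symmetry and $k=0$ steps, and the structural analysis of the weight $E(\ell)$ --- piecewise linearity with slopes $\pm 2a$ and $\pm 2a\sqrt{5}$, unique maximum at $\ell^\ast=\varphi n_u$, and even the sign claim $E(\ell^\ast)\leq 0$ whenever $\ell^\ast\leq k$ --- are all correct, and your split at $\ell=\beta k$ is the same as the paper's decomposition into $S_1,S_2,S_3$. The gap is in the final assembly. From $E(\ell^\ast)\leq 0$, unimodality, and $E(\ell)\leq E(\ell^\ast)-2a|\ell-\ell^\ast|$, summing the resulting geometric series yields only $\|\mathcal{C}e_n\|_a^2\leq C$ uniformly in $n$, i.e.\ boundedness of $\mathcal{C}$ --- not the exponential decay $e^{-2\delta|n|}$ that the lemma asserts and that the Hilbert--Schmidt compactness argument needs. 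The advertised ``common decay factor'' is never actually produced: the inequality $\varphi|n_u|+\varphi^{-1}|n_s|\geq\delta_0|n|$ is true but enters no estimate, and nothing in your bounds ties the size of any individual term to $|n|$.

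The failure is concrete in the regime you dismiss with ``terms absent by iv)'': take $n=(N,1-N)$, so that $k=n_1+n_2=1$ stays bounded while $|n|=2N-1\to\infty$. Here $\ell^\ast=\varphi n_u\approx N>k$, so your case distinction gives no bound on $\max_{\ell\leq k}E(\ell)=E(k)$ at all; the claim $E(\ell^\ast)\leq 0$ is unavailable (and indeed false for $\ell^\ast>k$ in general, e.g.\ $n=(0,1)$), and the bound \eqref{bl} is powerless because it only yields decay proportional to $k$, which stays bounded. In this regime all of the decay must come from showing that the weight itself satisfies $E(k)\leq -c|n|$ (which is true: $E(k)\approx -4aN$ here), i.e.\ from a lower bound on $-E(\ell)$ involving \emph{both} $k-\ell$ \emph{and} $|n_1|$. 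That is precisely what the paper supplies via Lemma~\ref{lemd}, namely $F(x,y)\geq x-y+|y|/2$ for $|y|\geq 2\varphi^{-1}$, which translates into $-E(\ell)\geq a(k-\ell)+a|n_1|\Theta_n/2$, combined with the elementary comparison \eqref{eq:n1n2}, $|n_1+n_2|+|n_1|\Theta_n\geq(|n_1|+|n_2|)/4$, which converts the two decay sources into a single exponential rate in $|n|$. Symmetrically, in the regime $\ell^\ast\leq k$ with $E(\ell^\ast)$ close to $0$ you would still need the quantitative statement that this forces $\ell^\ast$ to lie near $k$ with $k\geq c|n|$, so that \eqref{bl} can take over. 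These are not deferrable ``bookkeeping'' items; they are the substance of the proof, and without them your argument establishes boundedness of $\mathcal{C}$ but not the decay \eqref{bp}.
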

\begin{proof}
Because of the symmetry relations i) and ii) in Lemma \ref{lema} 
the series \eqref{bk} obeys $\|{\cal C}e_n\|_a=\|{\cal C} e_{-n}\|_a$.
Hence it is sufficient to consider the case $n_1+n_2\geq 0$.

If $n_1+n_2=0$ then property iii) of Lemma \ref{lema} guarantees that
only a single term with $m_1=2n_1+n_2$ and $m_2=n_1+n_2$ 
contributes to \eqref{bk}, so that 
$m_u=\lambda_u n_u$ and $n_s=\lambda_s n_s$ by \eqref{bb}. Thus
\begin{align}\label{bq}
\|{\cal C}e_n\|_a =& \exp(-a(\lambda_u-1) |n_u|-a(1-\lambda_s)|n_s|)
\nonumber \\
\leq& \exp(-a(1-\lambda_s)(|n_u|+|n_s|)),
\end{align}
where we have used $\lambda_u\lambda_s=1$ and $\lambda_u>1$. As $|n_1|+|n_2|\leq
2(|n_u|+|n_s|)$ relation (\ref{bp}) holds for any $c\geq 1$ and 
any $\delta\leq a(1-\lambda_s)/2$.

Let us now assume that $n_1+n_2>0$. The sum in \eqref{bk} only runs over $m_1$, 
as only $m_2=n_1+n_2$ can give rise to non-zero terms.
Making use of (\ref{bl}), we now split this sum into three parts. 
\begin{equation}
\|\mathcal{C}e_n\|^2_a = S_1 + S_2 + S_3,
\end{equation}
where
\begin{equation}\label{eq:S_i}
S_i = \sum_{\substack{m_1\in I_i\\m_2=n_1+n_2}} |M_{-m_1+2n_1+n_2}(\lambda,n_1+n_2)|^2 
\exp(-2a(|n_s|-|n_u|+|m_u|-|m_s|))
\end{equation}
with $I_1 = \{m_1: m_1 < n_1\}$, $I_2 = \{m_1: n_1 \le m_1 \le n_1 + (1-\beta)(n_1+n_2)\}$, 
and $I_3 = \{m_1:m_1 >  n_1 + (1-\beta)(n_1+n_2)\}$.
Note that $S_1 = 0$ by iv) of Lemma~\ref{lema}.

For $S_2$ and $S_3$, we first 
need to have a closer look at the exponential factor.
Using \eqref{bb}, the exponent can be written as
\begin{equation}\label{bs}
2a(|n_s|-|n_u|+|m_u|-|m_s|) = a |n_1+n_2| F\left(\frac{m_1}{n_1+n_2}, 
\frac{n_1}{n_1+n_2}\right)
\end{equation}
where
\begin{equation*}
F(x,y)=2\left (
|\p x+1|-|\p^{-1}x-1| +|\p y-1|-
|\p^{-1}y+1| \right )\,,
\end{equation*}
and, as before, $\p=(\sqrt{5}+1)/2$ denotes the golden mean. 
If we employ 
the basic lower bound for $F$ derived in Lemma~\ref{lemd} in
Appendix~\ref{appb}, then \eqref{bs} yields
\begin{equation}\label{bu}
2a(|n_s|-|n_u|+|m_u|-|m_s|) \geq a(m_1-n_1+ |n_1| \Theta_n/2)
\end{equation}
with 
\begin{equation}\label{bv}
\Theta_n=
 \begin{cases}
   0 & \text{if $|n_1| < 2\p^{-1}|n_1+n_2|$,}\\
   1 & \text{if $|n_1| \geq 2\p^{-1}|n_1+n_2|$.}
\end{cases}
\end{equation}

With this lower bound we can now estimate $S_2$ and $S_3$ as we
have essentially reduced the problem to a geometric series.
For $S_3$ we use the trivial estimate v) of Lemma \ref{lema} giving  
\begin{equation}\label{bw}
|S_3| \leq \frac{\exp(-a(1-\beta)(n_1+n_2))}{1-\exp(-a)} 
\exp(-a |n_1| \Theta_n/2) \,.
\end{equation}
Now, a short calculation shows that 
\begin{equation}\label{eq:n1n2}
|n_1+n_2|+|n_1|\Theta_n \geq (|n_1|+|n_2|)/4\,,
\end{equation}
so we can bound $S_3$ from above by 
\begin{equation}\label{S3above}
|S_3|\leq \frac{\exp(-\delta'|n|)}{1-\exp(-a)}
\end{equation}
for any $\delta'\leq a(1-\beta)/8$. 

For $S_2$, the bound \eqref{bl} yields
\begin{equation*}
|S_2| \leq \sum_{0\leq k \leq (1-\beta)(n_1+n_2)}
\exp(-ak) \exp(-2\alpha((1-\beta)(n_1+n_2)-k)\exp(-a |n_1|\Theta_n/2).
\end{equation*}
Estimating this finite sum by a simple bound for its largest term 
we can write
\begin{equation*}
|S_2| \leq ((1-\beta)(n_1+n_2)+1) \exp(-\min\{a,2\alpha \}(1-\beta)(n_1+n_2)
-a |n_1|\Theta_n/2).
\end{equation*}
Using  (\ref{eq:n1n2}) we see that 
\begin{equation} \label{S2above}
|S_2| \leq ((1-\beta)|n|+1) \exp(-\delta'|n|)\,,
\end{equation}
for any  $\delta'\leq \min\{ a/8,\alpha/4 \} (1-\beta)$. 
Putting the two bounds for $S_2$ and $S_3$ together, the assertion finally follows.  
\end{proof}

Standard arguments now yield the following result.

\begin{prop}\label{prop:comp}
For any $\lambda\in\mathbb{C}$ with $|\lambda|<1$ and any $a>0$, expression \eqref{ad} gives rise 
to a bounded and compact operator $\mathcal{C}\colon \mathcal{H}_a \to \mathcal{H}_a$. 
\end{prop}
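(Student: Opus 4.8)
The plan is to combine the exponential bound from Lemma~\ref{lemb} with the fact that the monomials \eqref{bg} form an orthonormal basis of $\mathcal{H}_a$. The decisive elementary observation is that, because $|n|=|n_1|+|n_2|$, the series $\sum_{n\in\mathbb{Z}^2}\exp(-\delta|n|)$ factorises into a product of two convergent geometric series and is therefore finite for every $\delta>0$. Essentially all of the analytic work has already been absorbed into Lemma~\ref{lemb}, so what remains is a standard summability argument.

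First I would establish boundedness. Any $f\in\mathcal{L}$ expands as $f=\sum_n c_n e_n$ with only finitely many non-zero coefficients and $\|f\|_a^2=\sum_n|c_n|^2$. Applying the triangle inequality, then the bound \eqref{bp}, and finally the Cauchy-Schwarz inequality yields
\begin{equation*}
\|\mathcal{C}f\|_a \leq \sum_n |c_n|\,\|\mathcal{C}e_n\|_a \leq c\sum_n |c_n|\exp(-\delta|n|) \leq c\Bigl(\sum_n\exp(-2\delta|n|)\Bigr)^{1/2}\|f\|_a .
\end{equation*}
Hence $\mathcal{C}$ is bounded on the dense subspace $\mathcal{L}$ and extends uniquely to a bounded operator on all of $\mathcal{H}_a$; by density of $\mathcal{L}$ and uniqueness of bounded extensions this extension coincides with the composition operator \eqref{ad} wherever the latter is defined.

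Having secured boundedness, I would deduce compactness by approximating $\mathcal{C}$ in operator norm by finite-rank operators. Let $P_N$ denote the orthogonal projection onto the span of $\{e_n:|n|\leq N\}$; since the range of $P_N$ is finite-dimensional, each $\mathcal{C}P_N$ is of finite rank. Repeating the Cauchy-Schwarz estimate above but restricting the summation to $|n|>N$ gives
\begin{equation*}
\|\mathcal{C}-\mathcal{C}P_N\| = \|\mathcal{C}(I-P_N)\| \leq c\Bigl(\sum_{|n|>N}\exp(-2\delta|n|)\Bigr)^{1/2},
\end{equation*}
and the right-hand side tends to zero as $N\to\infty$ because the full series converges. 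Thus $\mathcal{C}$ is a norm-limit of finite-rank operators and is therefore compact. Alternatively, the same estimate shows directly that the matrix $(\langle\mathcal{C}e_n,e_m\rangle_a)$ is square-summable, so that $\mathcal{C}$ is in fact Hilbert--Schmidt.

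I do not anticipate a genuine obstacle here, since the exponential decay in Lemma~\ref{lemb} does all the heavy lifting. The only points needing a little care are the order of the argument—boundedness of $\mathcal{C}$ must be in place before one manipulates $\mathcal{C}P_N$ and interchanges $\mathcal{C}$ with the infinite expansion of $(I-P_N)f$—and the verification that the factorisation $\sum_n\exp(-\delta|n|)<\infty$ really controls the relevant tails. Both are routine once the weighted-norm geometry of $\mathcal{H}_a$ is kept in mind.
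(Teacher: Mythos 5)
Your proposal is correct and follows essentially the same route as the paper: both proofs establish boundedness on the dense subspace of Laurent polynomials by combining the expansion $f=\sum_n c_n e_n$ with the Cauchy--Schwarz inequality and the square-summability of $\|\mathcal{C}e_n\|_a\leq c\exp(-\delta|n|)$ from Lemma~\ref{lemb}, and then extend by density. The only cosmetic difference lies in the compactness step, where the paper immediately concludes that $\mathcal{C}$ is Hilbert--Schmidt (hence compact) from $\sum_n\|\mathcal{C}e_n\|_a^2<\infty$, whereas your primary argument exhibits $\mathcal{C}$ as a norm-limit of the finite-rank operators $\mathcal{C}P_N$ --- an equally standard variant, and you correctly flag both the Hilbert--Schmidt alternative and the need to have boundedness in hand before interchanging $\mathcal{C}$ with infinite expansions.
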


\begin{proof}
Lemma \ref{lemb} implies that 
\begin{equation}
\label{CHS}
M:=\left ( \sum_{n\in \mathbb{Z}} \|\mathcal{C} e_n\|_a^2 \right )^{1/2} <\infty\,.  
\end{equation}
Thus the operator given by the expression \eqref{ad} is bounded on the
set of Laurent polynomials since, using the Cauchy-Schwarz 
inequality, we have    
\begin{equation}
\|\mathcal{C} f\|_a \leq \sum_{n\in\mathbb{Z}} |f_n| \exp(-a|n_u|+ a |n_s|)  \|\mathcal{C} e_n\|_a
\leq M\| f\|_a\,;
\end{equation}
thus, by a standard result (see, for example, \cite[Theorem~1.7]{RS}) 
the operator $\mathcal{C}$ has a unique bounded extension, which
we denote by the same symbol, from $\mathcal{H}_a$ to
$\mathcal{H}_a$. In fact, inequality (\ref{CHS}) implies 
that $\mathcal{C}$ is Hilbert-Schmidt on $\mathcal{H}_a$, and
therefore 
compact (see, for example, \cite[Theorem~VI.22]{RS}).
\end{proof}
We have just seen that $\mathcal{C}:\mathcal{H}_a\to \mathcal{H}_a$ 
is Hilbert-Schmidt. 
In fact, the exponential decay of the matrix elements of 
$\mathcal{C}$ established in Lemma~\ref{lemb} implies that 
$\mathcal{C}$ has even stronger compactness properties. 
It can be shown, for example by the argument used in the proof of 
\cite[Theorem~7]{FaRo_NONL06}, 
that the singular values of $\mathcal{C}$ decay
at a stretched-exponential rate, so $\mathcal{C}$ belongs the
exponential classes introduced in \cite{expo}, in common with the
transfer operators corresponding to higher-dimensional analytic
expanding maps (see, for example, \cite{BJ1,BJ2}).

\section{Spectral data}\label{sec3}

In order to complete the proof of Theorem~\ref{propa}, it remains to compute the spectrum
of the operator $\mathcal{C}$. This can be achieved by considering suitable matrix
representations of projections of this compact operator to finite-dimensional subspaces.

We start by observing that, by \eqref{bg} and \eqref{bh}, the 
matrix representation $\Gamma$ of 
$\mathcal{C}$ with respect to the orthonormal basis $(e_n)_{n\in \mathbb{Z}^2}$ of $\mathcal{H}_a$ is 
of the form
\begin{equation}\label{ca}
\Gamma_{n,m}=\langle \mathcal{C} e_n\,e_m \rangle_a =
b_{n,m}  \exp(a|n_u|-a|n_s|-a |m_u|+a|m_s|) \,.
\end{equation}

A short calculation using \eqref{bi}, \eqref{bj} and Lemma~\ref{lema} i), iii), iv)
yields the following cases:
\begin{eqnarray}
n_1+n_2\neq m_2: &\quad & b_{n,m}=0 \label{cb} \\
n_1+n_2= m_2=0: &\quad & b_{n,m}=\delta_{m_1, n_1} \label{cc} \\
n_1+n_2= m_2>0: &\quad & b_{n,m}=\left\{
\begin{array}{lcl} 0 & \mbox{ if } & m_1<n_1\\
(-\lambda)^{m_2} &\mbox{ if } & m_1=n_1
\end{array} \right. \label{cd}\\
n_1+n_2= m_2<0: &\quad & b_{n,m}=\left\{
\begin{array}{lcl} 0 & \mbox{ if } & m_1>n_1\\
(-\bar{\lambda})^{-m_2} &\mbox{ if } & m_1=n_1 \, .
\end{array} \right. \label{ce}
\end{eqnarray}

These properties will turn out to be sufficient to show that we can
order the basis elements in such a way that the corresponding matrix
is upper-triangular. We first arrange the basis 
$(e_n)_{n\in \mathbb{Z}^2}$ 
as a sequence in the order of increasing norm
$|n|$, with groups of elements with the same norm traversed in 
counter-clockwise direction, that is, 
\begin{equation*}
e_{0,0}, e_{1,0}, e_{0,1}, e_{-1,0}, e_{0,-1}, e_{2,0}, 
e_{1,1}, e_{0,2}, e_{-1,1}, e_{-2,0}, e_{-1,-1}, e_{0,-2}, e_{1,-1}, \ldots \, .
\end{equation*}
We then re-order this sequence as follows. We move along the sequence above from left to right. If we encounter a basis element $e_{n_1,n_2}$ with $n_1 n_2 < 0$ we move the element to the left-most position of the current sequence. We thus obtain the following order: 
\begin{equation*}
\ldots, e_{1,-1}, e_{-1,1}, e_{0,0}, e_{1,0}, e_{0,1}, e_{-1,0}, e_{0,-1}, e_{2,0}, 
e_{1,1}, e_{0,2}, e_{-2,0}, e_{-1,-1}, e_{0,-2}, \ldots \, .
\end{equation*}
\begin{lem}\label{lemc}
The matrix given by \eqref{ca} 
is upper-triangular with respect to the basis
re-ordered as above. Moreover, its only non-zero diagonal entries are 
$\Gamma_{00,00}=1$, $\Gamma_{0k,0k}=(-\lambda)^k$ and 
$\Gamma_{0 -k,0 -k}=(-\bar{\lambda})^k$ where $k\in \mathbb{N}$. 
\end{lem}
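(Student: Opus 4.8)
The plan is to establish two things about the matrix $\Gamma$ in the re-ordered basis: first, that the claimed diagonal entries are correct, and second, that all entries below the diagonal vanish. The diagonal entries are immediate from the case analysis in \eqref{cc}, \eqref{cd}, \eqref{ce}: setting $m=n$ forces $n_1+n_2=m_2=n_2$, i.e. $n_1=0$, so only basis elements of the form $e_{0,k}$ can contribute a nonzero diagonal entry, and for these $b_{n,n}=\delta_{m_1,n_1}(-\lambda)^{k}=(-\lambda)^{k}$ when $k>0$, $b_{n,n}=(-\bar\lambda)^{-k}$ when $k<0$, and $b_{n,n}=1$ when $k=0$. Since the exponential prefactor in \eqref{ca} equals one whenever $m=n$, we get $\Gamma_{0k,0k}=(-\lambda)^k$ for $k>0$, $\Gamma_{0,-k;0,-k}=(-\bar\lambda)^k$ for $k>0$, and $\Gamma_{00,00}=1$; all other diagonal entries vanish.

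For the upper-triangularity I would argue that whenever $\Gamma_{n,m}\neq 0$ the element $e_m$ precedes or equals $e_n$ in the re-ordered sequence (so the nonzero entry sits on or above the diagonal). From \eqref{cb} a nonzero entry requires $m_2=n_1+n_2$; combined with \eqref{cd} and \eqref{ce} it further requires $m_1\geq n_1$ when $n_1+n_2>0$ and $m_1\leq n_1$ when $n_1+n_2<0$ (the case $n_1+n_2=0$ forcing $m=n$ by \eqref{cc}). The strategy is to split the ordering into the two ``special'' elements $e_{1,-1}, e_{-1,1}$ that were moved to the front, and the remaining elements kept in order of increasing $|n|$ with counter-clockwise traversal. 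I would show that for the bulk elements the constraints $m_2=n_1+n_2$ and $m_1\geq n_1$ (resp.\ $m_1\leq n_1$) force $|m|\geq|n|$, with equality only in controlled situations, so that $e_m$ comes no earlier than $e_n$; and I would separately verify that the two relocated elements cause no violation.

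The key computation is the norm comparison. Writing $m_2=n_1+n_2$ and, say, $m_1=n_1+j$ with $j\geq 0$ in the case $n_1+n_2>0$, one finds $|m|=|n_1+j|+|n_1+n_2|$ versus $|n|=|n_1|+|n_2|$. Here I expect the main obstacle: when $n_1$ and $n_2$ have opposite signs the triangle-inequality estimate $|m|\geq|n|$ need not be strict, or may even fail for the smallest cases, which is exactly why the two elements $e_{1,-1}$ and $e_{-1,1}$ (the cases where $|n_1+n_2|<|n_1|+|n_2|$ is most severe) had to be pulled to the front of the sequence by hand. So the delicate part is a careful case distinction on the signs of $n_1,n_2$ (and of $m_1$): in the ``same sign'' regime the inequality is clean, while in the ``opposite sign'' regime I would check directly that the only potential ties or reversals are resolved either by the counter-clockwise tie-breaking rule within a fixed norm shell or by the relocation of $e_{1,-1},e_{-1,1}$.

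Once these inequalities are in place, upper-triangularity follows: every nonzero $\Gamma_{n,m}$ has $e_m$ at a position less than or equal to that of $e_n$ in the re-ordered sequence, which is precisely the statement that the matrix is upper-triangular. I would present the sign-case analysis as the technical heart of the argument, keeping the norm inequality $|m|\ge |n|$ (together with the tie-breaking) as the single recurring estimate, and treating the finitely many exceptional small-index configurations explicitly rather than trying to fold them into the general bound.
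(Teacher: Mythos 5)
Your identification of the diagonal entries is correct, and you extract the right constraints from \eqref{cb}--\eqref{ce}: a nonzero entry requires $m_2=n_1+n_2$, together with $m_1\geq n_1$ when $m_2>0$ and $m_1\leq n_1$ when $m_2<0$. The upper-triangularity argument, however, rests on a misreading of the re-ordering, and this is a genuine gap. It is not only the two elements $e_{1,-1}$ and $e_{-1,1}$ that get relocated: \emph{every} basis element $e_n$ with $n_1n_2<0$ --- an infinite family, $(1,-1),(-1,1),(2,-1),(1,-2),(-2,1),(-1,2),\ldots$ --- is moved to the left-most position as it is encountered. The re-ordered sequence therefore has an infinite initial block consisting of all mixed-sign indices, and since each newcomer is placed at the very front, within this block the norm ordering is \emph{reversed} (larger $|n|$ sits further to the left); this is what the dots at the left end of the paper's displayed sequence indicate. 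Consequently your plan to dispose of ``the two relocated elements'' and ``the finitely many exceptional small-index configurations'' by hand cannot be carried out, and your single recurring estimate fails in exactly the cases it was meant to cover.

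Concretely, three blocks must be treated, and the inequality $|m|\geq|n|$ is the right tool in only one of them. (a) Rows $n$ with $n_1n_2\geq0$ against columns $m$ with $m_1m_2<0$: here $e_m$ precedes $e_n$ regardless of norms, so these entries must vanish identically; no norm inequality helps, and one needs a pure sign argument ($n_1+n_2=m_2>0>m_1$ forces $n_1\geq 0>m_1$, hence $b_{n,m}=0$ by \eqref{cd}, and symmetrically for $m_2<0$). (b) Rows and columns both of mixed sign: because the order is reversed there, ``on or below the diagonal'' corresponds to $|m|\geq|n|$, so the inequality you need is the \emph{opposite} of the one you propose; for instance $\Gamma_{(-1,2),(-1,1)}=-\lambda\exp(\cdot)\neq0$ is a legitimate above-diagonal entry with $|m|=2<3=|n|$. (c) Only in the bulk-versus-bulk block is your inequality the relevant one. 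Two further slips: the claim that $n_1+n_2=0$ forces $m=n$ is false --- \eqref{cc} forces $m=(n_1,0)$, which differs from $n=(n_1,-n_1)$ whenever $n_1\neq0$, so row $(1,-1)$ has the nonzero off-diagonal entry $\Gamma_{(1,-1),(1,0)}$, harmless only because $e_{1,-1}$ lies in the front block; and your framing sentences assert that nonzero entries have $e_m$ preceding $e_n$, while your estimate aims to show the opposite (the latter is the correct direction, with rows indexed by $n$). Repaired along these lines, your argument becomes precisely the paper's proof: a case analysis over the three blocks, with the sign argument in the mixed block, the reversed inequality in the front block, and your estimate in the bulk block.
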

\begin{proof}
We first prove that the entire lower-left block with $n_1 n_2 \geq 0$ and 
$m_1 m_2 < 0$ consists of zeros. 
Assume the contrary, that is, assume that there exists 
some non-vanishing matrix 
element $\Gamma_{n,m}$ in this sector, that is, 
$b_{n,m}\neq 0$. From \eqref{cb} we get 
$m_2 = n_1 + n_2$, which is non-zero as $m_1m_2\neq 0$. 
If $n_1 + n_2 = m_2 > 0 > m_1$, then $n_1, n_2$ are non-negative, so $n_1>m_1$
and \eqref{cd} results in the contradiction $b_{n,m} = 0$. 
A similar reasoning applies in the case $n_1 + n_2 = m_2 < 0 < m_1$.

Next, we confirm that the upper-left block matrix with $n_1 n_2 < 0$ and 
$m_1 m_2 < 0$ is an upper-triangular matrix with zeros on the diagonal. For 
this we assume that a matrix entry lying on or below the diagonal 
is non-zero. Note that, with the chosen ordering, the indices of a
matrix 
element 
on or below the diagonal satisfy 
\begin{equation}\label{ch}
|m_1| + |m_2| = |m| \geq |n| = |n_1| + |n_2| \, .
\end{equation}
Since $m_1 m_2 < 0$ we have $m_2 \neq 0$. If $m_2 > 0$ then \eqref{cd} 
implies $m_1 \geq n_1$ and
hence the condition (\ref{ch}) results in 
$n_1 \leq m_1 < 0 < n_2 \leq m_2$. In particular
$m_2 > n_1 + n_2$ so that \eqref{cb} yields the contradiction $b_{n,m}=0$. 
The case for $m_2 < 0$ is analogous.

Finally, we show the claim for the most interesting case, 
the lower-right block matrix where 
$n_1 n_2 \geq 0$ and 
$m_1m_2 \geq 0$. In this case, the indices of a matrix element  
on or below the diagonal satisfy $|n| \geq |m|$. Since 
the components of $n$ and $m$ have
equal signs, this condition can be written as $|n_1 + n_2| \geq  
|m_1 + m_2|$. If 
$b_{n,m}\neq 0$, then $m_2 = n_1 + n_2$ by \eqref{cb},
so that $|m_2| \geq |m_1 + m_2|$. Since $m_1 m_2 \geq 0$ we
conclude $m_1 = 0$. Then one of the following three cases holds:

i) $m_2 = 0$: We get $0 = m_2 = n_1 + n_2$ and $n_1 n_2 \geq 0$ results in 
$n_1 = n_2 = m_1 = m_2 = 0$ for which
$\Gamma_{00,00} = 1$ by \eqref{cc}. 

ii) $m_2 > 0$: By \eqref{cd} we have $0 = m_1 \geq n_1$.
Since $m_2 = n_1 + n_2>0$ and $n_1$ and $n_2$ have the same sign this implies 
$m_1 = n_1 = 0$
and $m_2 = n_2$. The corresponding diagonal entry is given by  
$\Gamma_{0 m_2,0 m_2} = (-\lambda)^{m_2}$ by \eqref{cd}. 

iii) $m_2 < 0$: By \eqref{ce} we have $0 = m_1 \leq n_1$ and by the
same argument as in the previous case we get $n_1 = m_1 = 0$ and 
$n_2 = m_2$. The
corresponding diagonal entry is given by $\Gamma_{0 m_2,0 m_2} = 
(-\bar{\lambda})^{-m_2}$ by \eqref{ce}. 
\end{proof}

We are now able to finish the proof of the main result.

\begin{proof}[Proof of Theorem~\ref{propa}]
Compactness of $\mathcal{C}\colon \mathcal{H}_a \to \mathcal{H}_a$ was 
established in
Proposition \ref{prop:comp}.  
Let $N\in\mathbb{N}$ and let $P_N:\mathcal{H}_a\to \mathcal{H}_a$
denote the orthogonal projection onto the 
subspace spanned by $\{e_n : |n|  \leq N\}$. 
By Lemma \ref{lemc}, the spectrum of the  
finite rank operator $P_N\mathcal{C}P_N$ is given by 
\begin{equation}\label{specfinsec}
\sigma(P_N\mathcal{C}P_N) = \{ (-\lambda)^k: k\in\{1,\ldots,N\} \} \cup 
\{ (-\bar{\lambda})^k: k\in\{1,\ldots, N\} \}
\cup \{1,0\}\,.
\end{equation}
Moreover, each non-zero element of the spectrum 
of $P_N\mathcal{C}P_N$ 
is an eigenvalue the algebraic multiplicity of which 
coincides with the number of times the non-zero number 
occurs in (\ref{specfinsec}). 
Now, in order to finish the proof we only need to show that the nonzero
spectrum (with algebraic multiplicities) of the transfer 
operator $\mathcal{C}$ is captured by the non-zero spectra of the
finite rank operators $P_N\mathcal{C}P_N$. 
This follows from a standard spectral approximation result 
(see, for example, \cite[XI.9.5]{DS}) together with the fact 
that $P_N\mathcal{C}P_N$ converges to 
$\mathcal{C}$ in the operator norm on $\mathcal{H}_a$, 
which in turn follows from the fact that $\mathcal{C}$ is compact 
(see, for example, \cite[Theorem~4.1]{ALL}).
\end{proof}

\section{Invariant measure and correlation decay}\label{sec4}

Since the map (\ref{ab}) is area preserving it is clear that Haar
measure $\mu$ on the torus is invariant under $T$. 
This invariance can also be cast in terms of spectral properties of the transfer operator. 
In order to see this,  we note that 
the constant function $e_0$ is the eigenfunction of the transfer
operator corresponding to the 
eigenvalue $1$, since $\mathcal{C}e_0 = e_0\circ T=e_0$. 
Furthermore, for $f$ a Laurent polynomial
we define the functional
\begin{equation}\label{da}
\ell_*(f)=\int_{\mathbb{T}^2}f\,d\mu= \int_{\mathbb{T}^2} f(z) \frac{dz_1}{2 \pi i z_1}
\frac{dz_2}{2 \pi i z_2} = f_0 \, .
\end{equation}
Using the definition of the norm (\ref{bd}) we have
$|\ell_*(f)|=|f_0 |\leq \|f\|_a$. Thus the functional $\ell_*$ is bounded on
the dense subset of Laurent polynomials and thus extends uniquely 
to a functional $\ell_* : \mathcal{H}_a \rightarrow \mathbb{C}$ 
on the entire space ${\cal H}_a$,
which for simplicity we denote by $\ell_*$ again.

Using the fact that the map $T$ preserves Haar measure $\mu$ on $\mathbb{T}^2$, 
we have for any Laurent polynomial $f$ the relation
$\ell_*(\mathcal{C} f)=\ell_*(f\circ T)=\ell_*(f)$ and by continuity
this identity carries over to the entire space as well. Hence $\ell_*$
is the left-eigenfunctional of the transfer operator corresponding to the
leading eigenvalue $1$. 

All in all, we can now define a bounded projection 
$\mathcal{P}:\mathcal{H}_a\to \mathcal{H}_a$ by setting 
\begin{equation}
\label{Pdef}
\mathcal{P}f=\ell_*(f)e_0\,, 
\end{equation}
which, by what has been said above, satisfies 
\begin{equation}
\label{Pprop}
\mathcal{C}\mathcal{P}=\mathcal{P}\mathcal{C}=\mathcal{P}\,, 
\end{equation}
which means that $\mathcal{P}$ is the spectral projection
corresponding to the eigenvalue 1.

We now turn to the study of correlation functions with respect to $\mu$. 
Let $g:\mathbb{T}^2 \rightarrow \mathbb{C}$ 
be analytic in a neighbourhood of the unit torus so that 
$|g_n|\leq c \exp(-\gamma |n|)$ for some $\gamma>0$, $c>0$. Choose $a$
sufficiently small so that $\exp(-\gamma |n|) \exp(a|n_u|-a|n_s|)$
decays exponentially in $|n|$. Define the functional
\begin{equation}\label{db}
\ell_g(f) = \int_{\mathbb{T}^2}fg\,d\mu=\int_{\mathbb{T}^2} g(z) f(z) \frac{dz_1}{2 \pi i z_1}
\frac{dz_2}{2 \pi i z_2}
=\sum_{n\in \mathbb{Z}^2} g_{-n} f_n
\end{equation}
on the dense subset of Laurent polynomials. Using the Cauchy-Schwarz inequality and the 
definition of the norm (\ref{bd}) we conclude that
\begin{equation}\label{dc}
|\ell_g(f)|^2 \leq \|f\|_a^2
\sum_{n\in\mathbb{Z}^2} |g_{-n}|^2 \exp(2a|n_u|-2a|n_s|) \, .
\end{equation}
Hence $\ell_g$ extends to a bounded functional on the
entire space ${\cal H}_a$. 

Thus, for any observable $h\in \mathcal{H}_a$, 
the correlation function (\ref{af}) 
can, using (\ref{da}), (\ref{Pdef}) and (\ref{db}), 
be cast into spectral form as follows:
\begin{equation}\label{dd}
C_{gh}(k)= \ell_g(\mathcal{C}^k h) - \ell_g(e_0) \ell_*(h) =
\ell_g(\mathcal{C}^kh-\mathcal{P}h)\,.
\end{equation}
Since the spectrum of $\mathcal{C}$ is discrete we have a partial
spectral 
decomposition (see, for example, 
\cite[Chapter~V, Theorem~9.2]{taylorlay}) of the form 
\begin{equation}\label{specdec}
\mathcal{C}^kh=\mathcal{P}h
+ \mathcal{R}^kh\,, 
\end{equation}
where 
$\mathcal{R}:\mathcal{H}_a\to \mathcal{H}_a$ 
is a compact operator 
with 
\begin{equation}
\label{Rspec}
\sigma(\mathcal{R})=\sigma(\mathcal{C})\setminus \{1\}\,,
\end{equation} 
which implies that 
the spectral radius of $\mathcal{R}$ is equal to $|\lambda|$, that is, 
\begin{equation}
\label{Rspecrad}
\lim_{k\to \infty}\|\mathcal{R}^k\|^{1/k}=|\lambda|\,.
\end{equation} 
Combining (\ref{dd}), (\ref{specdec}) and (\ref{Rspecrad}) 
we obtain the desired bound 
\begin{equation}
\limsup_{k\to \infty}|C_{gh}(k)|^{1/k}
=\limsup_{k\to \infty}|\ell_g(\mathcal{R}^kh )|^{1/k} 
\leq |\lambda|\,,
\end{equation}
since $\ell_g$ is bounded on $\mathcal{H}_a$. 
This furnishes the proof of Corollary~\ref{cora}.
It is quite easy to see that including lower-lying eigenvalues into the 
spectral decomposition (\ref{specdec}) we can obtain asymptotic 
expansions for the correlation function.

For the proof of Corollary~\ref{corb} we first note that (\ref{Pprop})
implies that for every natural number $k$ we have 
\begin{equation}
\label{cp}
\mathcal{C}^k-\mathcal{P}=\mathcal{R}^k\,.
\end{equation}
Using a Neumann series for the resolvent of
the compact operator $\mathcal{R}$ together with 
(\ref{cp}) and (\ref{dd}) we now obtain for all
$\zeta \in \mathbb{C}$ with $|\zeta|>1$ 
\begin{align}
\hat{C}_{gh}(\zeta) & = \sum_{k=0}^\infty \zeta^{-k} C_{gh}(k) \\
& = \sum_{k=0}^\infty \zeta^{-k} \ell_{g}(\mathcal{C}^kh-\mathcal{P}h)\\
& = \ell_g(\sum_{k=0}^\infty \zeta^{-k} \mathcal{R}^kh)\\
& = \zeta \ell_{g}((\zeta I-\mathcal{R})^{-1}h)\,.
\end{align}
The corollary now follows from (\ref{Rspec}) together with the
observation that the resolvent of the compact operator $\mathcal{R}$ 
is analytic on
the punctured plane $\mathbb{C}\setminus \{0\}$ except for poles at the
non-zero eigenvalues (see, for example, 
\cite[Chapter~V, Corollary~10.3]{taylorlay}).

\section{Conclusion}

Having access to explicitly solvable examples helps to understand
dynamical features and to test conjectures.
Our example demonstrates that in the analytic category hyperbolic
diffeomorphisms exist for which the corresponding transfer operator 
has infinitely many distinct eigenvalues. In addition, eigenvalues 
can be arbitrarily close to one in modulus.

The Hamiltonian structure, that is, 
the fact that we have considered an
area preserving diffeomorphism has simplified our arguments at a
technical level. In addition, the model considered here does not show
the generic decay of eigenvalues expected for two dimensional 
maps (see \cite{Nau}). 
It is, however, rather straightforward to analyse
more general models along the lines presented here to restore
the generic behaviour and to investigate cases with a non-trivial 
invariant measure.

Our setup has been tailored for the model under consideration.
We have chosen a special Hilbert space with equal weightings
and components according to the eigendirections of the cat map,
see (\ref{bb}). While these choices turned out to be successful
their precise meaning remained somehow obscure. In addition, we were
able to transform the matrix representation of the transfer operator to
a triangular structure which gave us access to the entire spectrum. 
All these features are not entirely coincidental, in the sense that there is an underlying
functional analytic structure. Uncovering this structure
requires a more general approach based
on more subtle functional analytic techniques. The focus of the present
contribution has been on an elementary rigorous study of a particular
example which should be accessible for a larger, non-specialised audience. The 
general theory for analytic diffeomorphism of the torus alluded to above 
will be presented elsewhere.


\appendix
\setcounter{section}{0}
\renewcommand\thesection{\Alph{section}}

\section{Some numerical findings}\label{appa}

A visual impression of the hyperbolic structure can be obtained by the numerical
computation of the unstable and stable manifold of the fixed point. Straightforward
forward and backward iteration gives a fairly robust algorithm for the
computation of a finite part of these manifolds, see Figure~\ref{figa}. Even 
though the invariant density is uniform the geometry of the hyperbolic structure
is apparently non-uniform, but this non-uniformity is compensated for by a
respective variation of the local expansion and contraction rates.

\begin{figure}[h]
\centering \includegraphics[width=0.5 \textwidth]{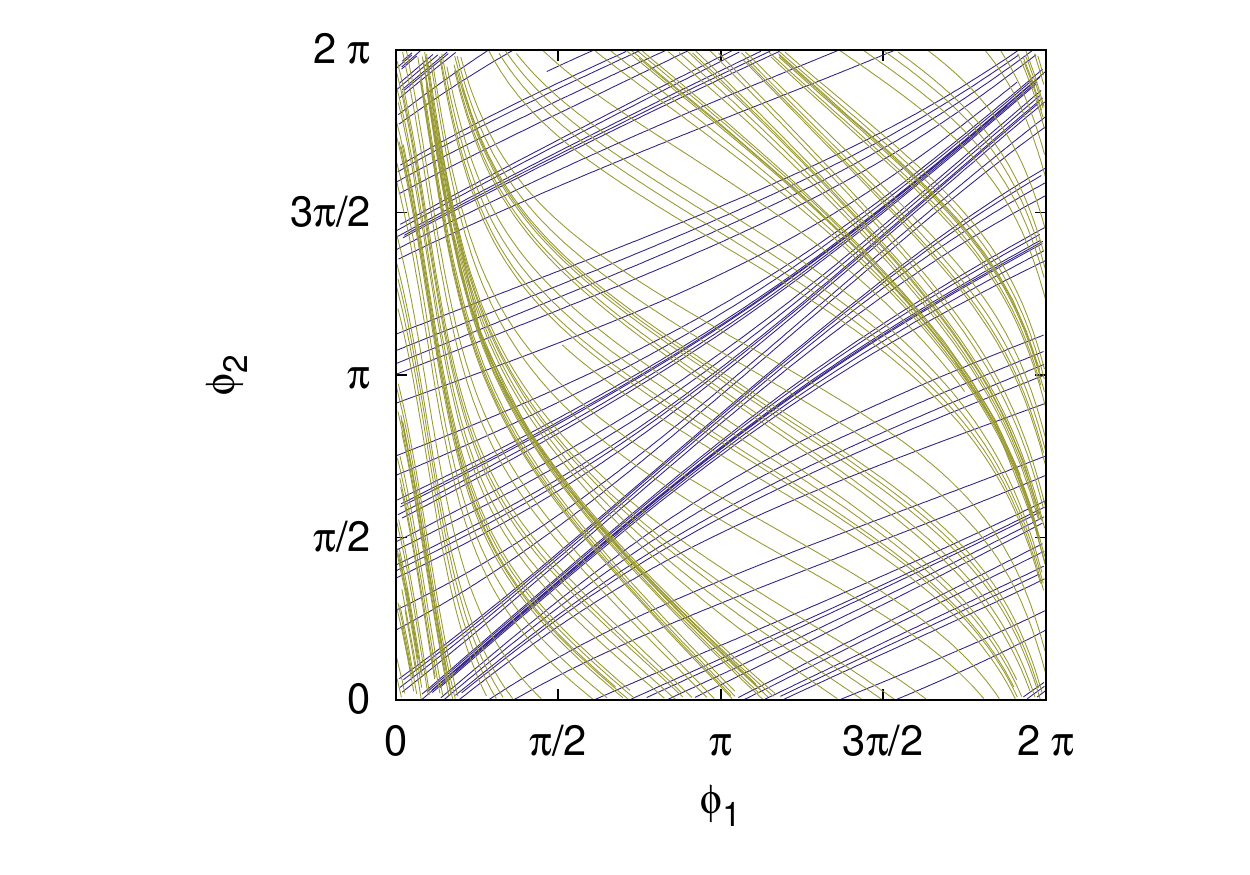}
\caption{Numerical result for the unstable (blue) and stable (bronze)
manifold of the map given by (\ref{ab}), for $\lambda=0.7 \exp(0.3i)$. 
\label{figa}}
\end{figure}

For simple trigonometric observables, the correlation function can
be computed directly. Consider, for instance, the case of the autocorrelation
of $\cos(\phi_2)$,  which corresponds to choosing 
$g(z_1,z_2)=h(z_1,z_2)=(z_2+z_2^{-1})/2$ in (\ref{af}). Since the 
invariant density is constant the mean values obviously vanish.
In order to compute the correlation integral we introduce the shorthand 
\begin{equation}\label{fa}
w_{n_1 n_2}(z_1,z_2)=z_1^{n_1} z_2^{n_2}
\end{equation}
for denoting monomials. By definition of the transfer operator, it follows 
that for a function $f$ which is analytic for $|z_\ell|<1$ the
expression $\mathcal{C}f=f\circ T$ is analytic for $|z_\ell|<1$
as well. An analogous property holds for functions which are analytic
for $|z_\ell|>1$. Hence, the correlation integral can be written as
\begin{equation}\label{fb}
C(k)=\frac{1}{4} \int_{\mathbb{T}^2}  \left(z_2^{-1}
(\mathcal{C}^k w_{01})(z_1,z_2) + z_2
(\mathcal{C}^k w_{0 -1})(z_1,z_2) \right) \frac{dz_1}{2\pi i z_1}
\frac{dz_2}{2\pi i z_2} \, .
\end{equation}
As for the action of the transfer operator on monomials, we see that 
\begin{align}
(\mathcal{C} w_{0 n_2})(z_1,z_2) = & (-\lambda)^{n_2} z_2^{n_2} +
\mathcal{O}(z_2^{n_2} z_1), \qquad (n_2 \geq 0) \label{fca} \\
(\mathcal{C} w_{0 n_2})(z_1,z_2) = & (-\bar{\lambda})^{-n_2} z_2^{n_2} 
+\mathcal{O}(z_2^{n_2} z^{-1}_1), \qquad (n_2 \leq 0) \label{fcb} \\
(\mathcal{C} w_{n_1 n_2})(z_1,z_2) = &  {\cal O}(z_2^{n_1+n_2} z_1^{n_1}), 
\qquad (n_1, n_2 \geq 0 \mbox{ or } n_1, n_2 \leq 0) \label{fcc}
\end{align}
where the higher order terms, as mentioned above, are analytic either
for $|z_\ell|<1$ or $|z_\ell|>1$. Hence, only the leading term in 
(\ref{fca}) and (\ref{fcb}) contributes to the integrals in 
(\ref{fb}) and we arrive at
\begin{equation}\label{fd}
C(k)= \frac{(-\lambda)^k + (-\bar{\lambda})^k}{4} \, .
\end{equation}
As a by-product we obtain that, as expected, the upper bound given by 
Corollary~\ref{cora} is sharp.

\begin{figure}[h]
\centering \includegraphics[width=0.5 \textwidth]{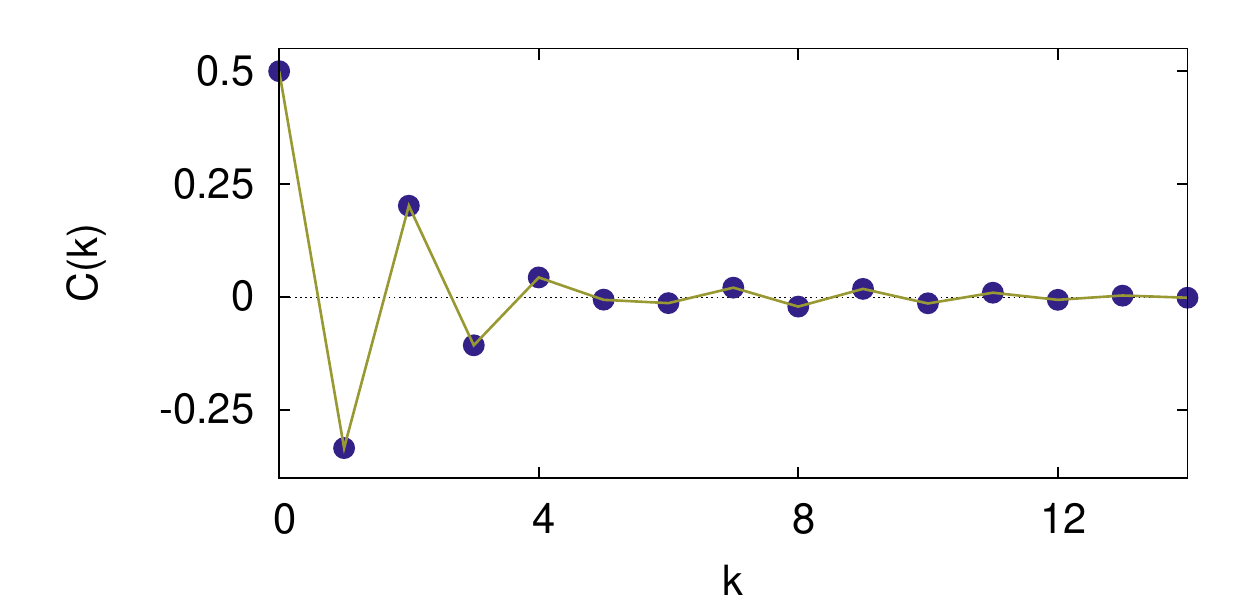}
\caption{Autocorrelation function of $\cos(\phi_2)$ according to
equation (\ref{fd}) (bronze line). 
Symbols (blue) are results of a numerical simulation
with an ensemble of $10^8$ data points.
\label{figb}}
\end{figure}

\section{A lower bound}
\label{appb}
This short appendix is devoted to proving a bound required in the proof of Lemma~\ref{lemb}. 
\begin{lem}\label{lemd}
Let $F\colon\mathbb{R}^2\to \mathbb{R}^2$ be given by
\begin{equation}
F(x,y)=2\left ( |\p x+1|-|\p^{-1}x-1| +|\p y-1|-
|\p^{-1}y+1| \right )  \, .
\end{equation}
Then for $x\geq y$, we have
\begin{equation}\label{ea}
 F(x,y) \geq 
  \begin{cases}
   x-y & \text{if $|y|< 2\p^{-1}$,}\\
   x-y + |y|/2 & \text{if $|y| \geq 2\p^{-1}$,}
\end{cases}
\end{equation}
where, as before, $\p=(1+\sqrt{5})/2$ denotes the golden mean. 
\end{lem}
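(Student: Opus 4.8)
The plan is to prove the lower bound for

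\begin{equation*}
F(x,y)=2\left( |\p x+1|-|\p^{-1}x-1| +|\p y-1|-|\p^{-1}y+1| \right)
\end{equation*}

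by decomposing it into two single-variable pieces. Writing $F(x,y)=2\bigl(g(x)+h(y)\bigr)$ where $g(x)=|\p x+1|-|\p^{-1}x-1|$ and $h(y)=|\p y-1|-|\p^{-1}y+1|$, I would first analyse each piece separately as a continuous piecewise-linear function. Each of $g$ and $h$ has exactly two breakpoints (where one of the enclosed affine expressions vanishes): $g$ breaks at $x=-\p^{-1}$ and $x=\p$, while $h$ breaks at $y=\p^{-1}$ and $y=-\p$. On each of the three resulting intervals both functions are affine, so I can read off explicit slopes and values. The key arithmetic facts driving the estimate are $\p-\p^{-1}=1$ and $\p+\p^{-1}=\sqrt5$, together with $\p\p^{-1}=1$; these let me express the slopes cleanly. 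For instance $g$ has slope $\p-\p^{-1}=1$ on the middle region and slope $\p+\p^{-1}=\sqrt5$ (up to sign) on the outer regions, and similarly for $h$.

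The heart of the argument is to track how $F$ behaves along the diagonal constraint $x\ge y$, splitting into the two regimes dictated by $|y|$. First I would verify the baseline bound $F(x,y)\ge x-y$ everywhere on $\{x\ge y\}$: since the claimed bound is linear in $(x,y)$, it suffices to check it at the breakpoints of the piecewise-linear function $F$ and to confirm that the relevant directional slopes never fall below those of $x-y$; equivalently, I would show $g(x)-g(y')\ge \tfrac12(x-y')$ type comparisons hold, reducing everything to checking finitely many affine pieces. The case distinction on $|y|$ enters because when $|y|\ge 2\p^{-1}$ the point $y$ lies in an outer region of $h$ where the slope is steep (magnitude $\sqrt5$), and this extra steepness is exactly what produces the additional $|y|/2$ gain; I would isolate the contribution of $h(y)$ and show that for $|y|\ge 2\p^{-1}$ one has $2h(y)\ge (\text{its linear lower bound})+|y|/2$, then recombine with the bound on the $x$-part.

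Concretely, the cleanest route is to fix the sign cases. By the symmetry of the problem and the constraint $x\ge y$, I would enumerate the regions determined by the four breakpoints and, in each, replace the absolute values by their signed affine expressions, obtaining $F$ as an explicit linear function; then the inequality \eqref{ea} becomes a comparison of two affine functions on a polygonal region, which holds iff it holds at the vertices. I expect the \emph{main obstacle} to be bookkeeping rather than conceptual: ensuring that the finitely many region-and-vertex checks are organised so that no case is missed, particularly the boundary case $|y|=2\p^{-1}$ where the two branches of \eqref{ea} must agree, and confirming that the constant $2\p^{-1}$ is precisely the threshold at which the outer-region slope of $h$ kicks in. Once the vertices are identified, each individual check is a short computation using $\p^2=\p+1$, so the verification is routine but must be done carefully to extract exactly the coefficient $1/2$ in the second branch.
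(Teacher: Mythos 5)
Your high-level starting point (split $F$ into the single-variable pieces $g(x)=|\p x+1|-|\p^{-1}x-1|$ and $h(y)=|\p y-1|-|\p^{-1}y+1|$, so $F=2(g+h)$, and exploit piecewise linearity) is the same as the paper's, but two of your structural claims are wrong, and one of them sinks your argument for the second branch. First, the slopes are swapped: on the \emph{middle} intervals ($[-\p^{-1},\p]$ for $g$, $[-\p,\p^{-1}]$ for $h$) the two absolute values reinforce each other and give the steep slope $\pm(\p+\p^{-1})=\pm\sqrt{5}$, while on the \emph{outer} intervals they partially cancel and give slope $\pm(\p-\p^{-1})=\pm 1$. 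Second, $|y|\ge 2\p^{-1}$ does not place $y$ in an outer piece of $h$: the negative breakpoint of $h$ is at $y=-\p\approx -1.618$, so for $y\in[-\p,-2\p^{-1}]$ the point is still in the middle piece; moreover $2\p^{-1}$ is not a breakpoint of anything ($h$ breaks at $\p^{-1}$ and $-\p$). So the proposed mechanism ``the outer-region slope $\sqrt{5}$ kicks in exactly at $2\p^{-1}$ and produces the $|y|/2$'' is incorrect on both counts. In the paper's proof the threshold $2\p^{-1}$ arises for a different reason: it is \emph{twice} the abscissa $\p^{-1}$ of the extremum of the $y$-piece, which is precisely what makes $y-\p^{-1}\ge |y|/2$ (resp.\ $-\p^{-1}-y\ge |y|/2$) hold.

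The fatal problem is the recombination step for the branch $|y|\ge 2\p^{-1}$: showing $2h(y)\ge(\text{its linear lower bound})+|y|/2$ and ``recombining with the bound on the $x$-part'' presupposes that the first branch can be proved by adding a bound $2g(x)\ge x+c_1$ (valid for all $x$) to a bound $2h(y)\ge -y+c_2$ (valid for all $y$) with $c_1+c_2\ge 0$. No such bounds exist: the sharp constants are $c_1=c_2=-(2\sqrt{5}-1)\p^{-1}$, attained at $x=-\p^{-1}$ and $y=\p^{-1}$ respectively, and their sum is $\approx -4.29$. Indeed $F(x,y)\ge x-y$ is simply false without the constraint $x\ge y$: for instance $F(-\p^{-1},\p^{-1})=-4\sqrt{5}\,\p^{-1}<-2\p^{-1}$. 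Any correct proof must therefore use $x\ge y$ to prevent the two single-variable worst cases from occurring simultaneously---exactly the coupling that a variable-by-variable decomposition discards, and exactly what the paper supplies by its case analysis relative to the extremal points $-\p^{-1}$ and $\p^{-1}$ (it first proves the stronger bound $F\ge 2(x-y)$ and then trades part of the factor $2$ for the $|y|/2$ term). Your fallback plan---enumerate the sign regions intersected with $\{x\ge y\}$ and compare affine functions---is workable in principle, since those polyhedra do encode the constraint, but only if the slope bookkeeping is corrected and the unbounded regions are handled by checking recession directions as well as vertices; vertex checking alone does not suffice on unbounded regions.
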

\begin{proof}
We can write
\begin{equation} 
F(x,y) = G(x)-H(y),
\end{equation}
where
\begin{align}
G(x) =&  \begin{cases}
         -2x-4 & \text{ if $x\in (-\infty, -\p^{-1})$,}  \\
 2\sqrt{5} x & \text{ if $x\in [-\p^{-1}, \p]$,}  \\
          2x+4 & \text{ if $x\in (\p, +\infty)$,} 
 \end{cases} \label{ec1} \\
 H(y) =&  \begin{cases}
         2y-4 & \text{ if $y\in (-\infty, -\p)$,}  \\
 2\sqrt{5} y & \text{ if $y\in [-\p, \p^{-1}]$,}  \\
          -2y+4 & \text{ if $y\in (\p^{-1}, +\infty)$.} 
 \end{cases} \label{ec2} 
\end{align}
Let  $x_m = -\p^{-1}$ denote the minimum of $G$
and let $y_m = - x_m$ denote the maximum of $H$.  
We start by showing that for $x\geq y$ we have 
\begin{equation} \label{ghgreater}
F(x,y) \geq 2(x-y)\,.
\end{equation}
In order to see this, we first observe that 
$G(t) - H(t) \geq 0$ for all $t\in \mathbb{R}$
and that the minimal slope in modulus of $G$ and $H$ is $2$. Moreover, we have 
$G(x)\geq 2x$ for $x\geq y_m$ and $H(y)\leq 2y$ for $y \leq x_m$. 

Now, for $x \geq x_m$ we have 
$G'(x) \geq 2 $, so that 
\begin{equation}
F(x,y) = (G(x)-G(y)) + (G(y)-H(y)) \geq 2 (x-y), 
\quad \text{for } x_m \leq  y \leq x\,,
\end{equation}
while for $y \leq x \leq  y_m$ we have  $H'(y) \geq 2 $ so 
\begin{equation}
F(x,y) = (G(x)-H(x)) + (H(x)-H(y)) \geq 2 (x-y)\,. 
\end{equation}
Finally, we note that if $x\geq y_m$ and $y\leq x_m$ then 
\begin{equation}
F(x,y) = G(x) - H(y) \geq 2x -2y =2(x-y)\,, 
\end{equation}
which proves (\ref{ghgreater}) and the first part of the lemma.

In order to prove the second part, observe that for $x, x_m \geq y$ we have  
\begin{equation}
G(x)-H(y) \geq \frac{1}{2} (G(x)-H(y)) + \frac{1}{2} (G(x_m)-H(y))
\geq (x-y) + (x_m - y). 
\end{equation}
Thus, for $2x_m \geq y$ we have $x_m - y \geq -y/2 = |y| / 2$, and hence
\begin{equation}
F(x,y) = G(x) - H(y) \geq x-y + \frac{|y|}{2}.
\end{equation}
Similarly, for $x \geq y \geq 2 y_m$ we have $G(x)-H(y) \geq (x-y)+(y-y_m)$
and $y-y_m > |y|/2$, hence again $F(x,y) \geq x-y + |y|/2$,
finishing the proof of the lemma.
\end{proof}

\section{A remark on general Blaschke products}
\label{appc}

For our specific example defined in (\ref{aa}), the asymptotic decay of eigenvalues 
does not follow the generic pattern expected for two-dimensional maps
(see \cite{Nau}). 
It is
nevertheless quite easy to come up with solvable models exhibiting this 
generic behaviour. If we recall that the cat map can be written as a 
composition of area preserving orientation reversing linear automorphisms, 
and if we deform this automorphism by introducing a Blaschke factor, that is, if we define 
\begin{equation}\label{gb}
S_\lambda(z_1,z_2)=\left(\frac{z_1-\lambda}{1-\bar{\lambda} z_1} z_2,
z_1 \right)\,,
\end{equation}
which is an area preserving diffeomorphism of the torus, then the composition
\begin{equation}\label{gc}
T=S_\lambda \circ S_\mu
\end{equation}
yields a two-parameter area preserving family. With the tools introduced 
previously it is possible, but extremely tedious, to show that the corresponding
transfer operator is compact on a suitably weighted Hilbert space.
Even the spectrum, can be evaluated in closed form consisting of
simple eigenvalues
$1$, $(-\lambda)^n$, $(-\bar{\lambda})^n$, $(-\mu)^n$, $(-\bar{\mu})^n$,
$(-\lambda)^n (-\mu)^m$,
$(-\lambda)^n (-\bar{\mu})^m$,
$(-\bar{\lambda})^n (-\mu)^m$, and
$(-\bar{\lambda})^n (-\bar{\mu})^m$ where $n\geq 1$ and $m \geq 1$.
A more conceptual proof of these assertions is possible, but requires 
fairly heavy machinery, to be presented elsewhere. Here we will simply
illustrate this result by numerical means.
For that purpose we compute a truncated matrix representation of the transfer 
operator by using the standard Fourier basis (see (\ref{bg})), and
apply a standard 
eigenvalue solver. The result is presented in Figure~\ref{figc}.

\begin{figure}[h]
\centering \includegraphics[width=0.5 \textwidth]{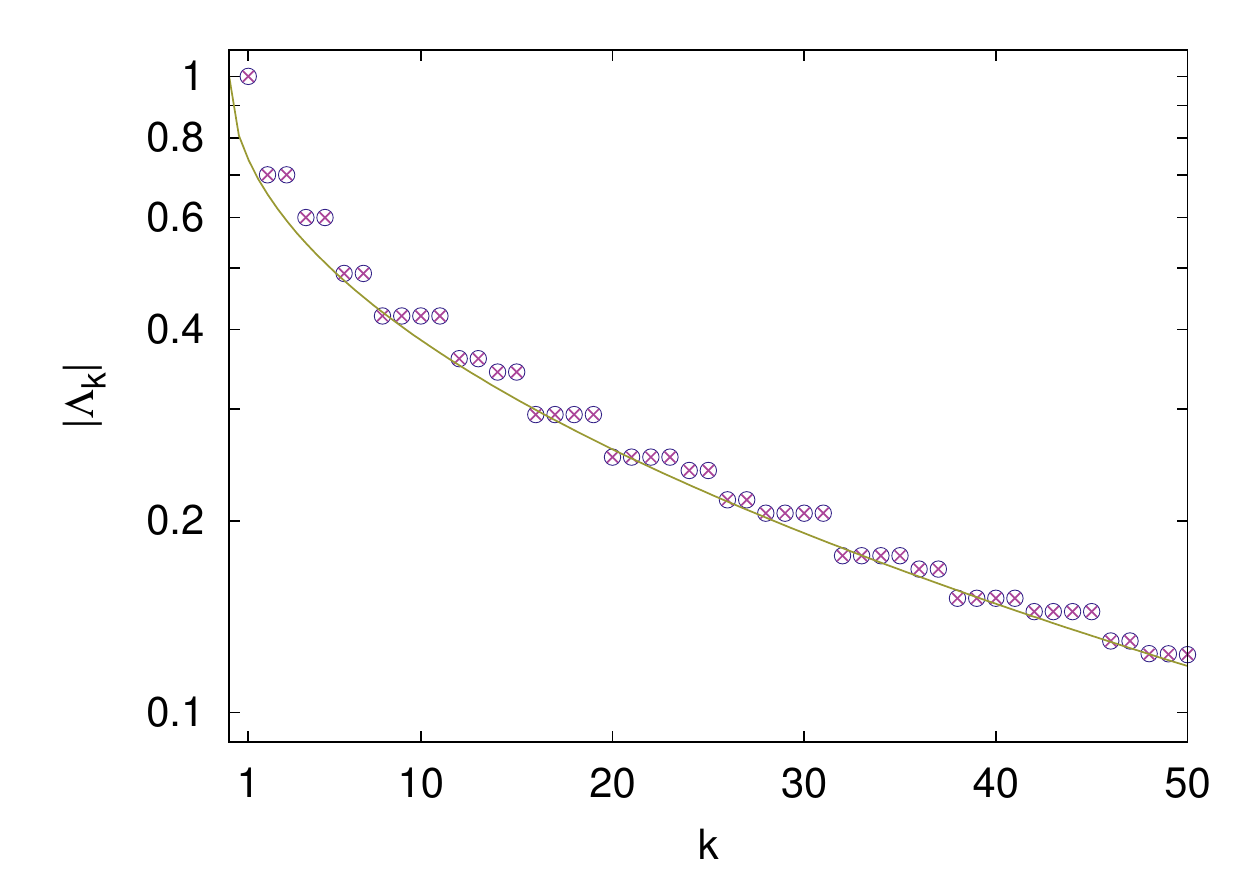}
\caption{Eigenvalues $\Lambda_k$ of the transfer operator corresponding to
the map (\ref{gc}) for $\lambda=0.7$ and $\mu=0.6$, ordered by size.
The numerical diagonalisation is illustrated by circles and the exact analytic
expression by crosses. The line indicates the generic 
asymptotic decay given by $|\Lambda_k| \sim \exp(-c \sqrt{k})$ with
$c=(\ln \lambda \ln \mu /2)^{1/2}$.
\label{figc}}
\end{figure}

For simplicity, we have so far considered 
area preserving maps where
the explicit expression for the invariant measure is known a priori. 
The invariant measures of two-dimensional Blaschke products exhibit a richer structure 
(see \cite{PuSh_ETDS08}).
The tools introduced here allow for a detailed study
of those measures. In a nutshell, maps where the determinant of the 
Jacobian vary may have fractal properties. Toy examples based on piecewise
linear maps are well established in the literature (see, for example, \cite{Neun_JPA02}).
Blaschke products offer a systematic and
analytic approach towards such features. 
For the purpose of illustration consider the simple model
\begin{equation}\label{gd}
T(z_1,z_2)=\left( z_1^2 \frac{z_2-\mu}{1-\bar{\mu} z_2}, 
z_1 \frac{z_2-\mu}{1-\bar{\mu} z_2} \right) \, .
\end{equation}
Our approach allows for a detailed investigation of
the spectral structures, but details turn out to be quite cumbersome.
Hence, to visualise the properties of the invariant measure we just
compute a histogram by a suitable numerical simulation, see
Figure~\ref{figd}. 

\begin{figure}[h]
\centering \includegraphics[width=0.5 \textwidth]{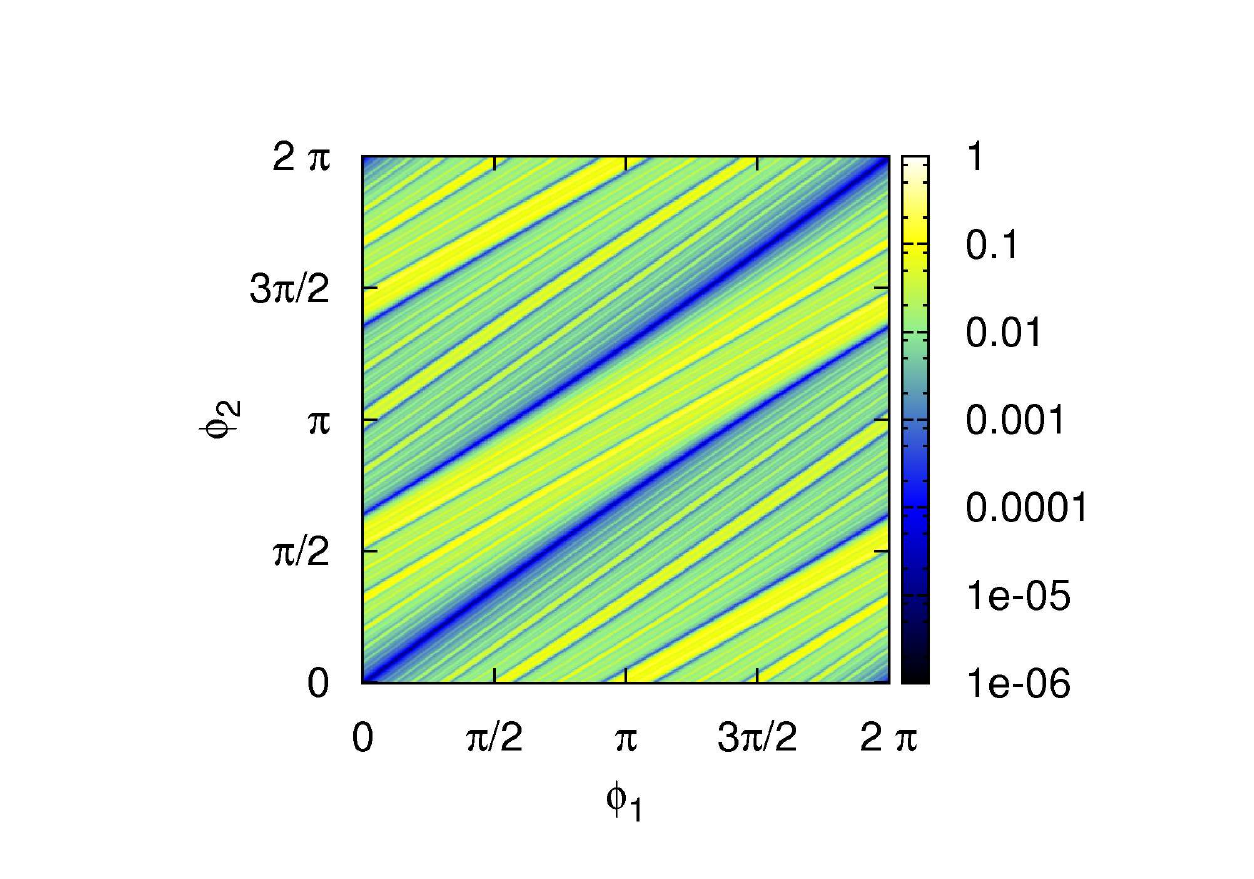}
\caption{Density plot illustrating the
invariant measure of the map given by (\ref{gd}) 
in real coordinates $z_\ell=\exp(i \phi_\ell)$ for $\mu=0.4$.
The data show a histogram with resolution $2\pi/5000 \times 2\pi/5000$
obtained from $10^4$ time traces of length
$2\times 10^7$ with uniformly distributed initial conditions.
\label{figd}}
\end{figure}


\end{document}